\newif\ifHasAppendix
\documentclass[11pt,a4paper]{article}

\usepackage[margin=1in]{geometry}
\usepackage[T1]{fontenc}
\usepackage[utf8]{inputenc}
\usepackage{lmodern}
\usepackage{microtype}

\usepackage[colorlinks,allcolors=blue!60!black]{hyperref}

\usepackage{amsmath,amsthm,amssymb}
\usepackage{mathtools}
\usepackage[square, numbers, comma, sort&compress]{natbib}
\usepackage{cleveref}
\usepackage{algorithm}
\usepackage{algorithmic}
\usepackage{natbib}

\usepackage{tikz}

\usepackage{authblk}

\usepackage{latexsym}
\usepackage{cleveref}
\usepackage{xargs}
\usepackage{xcolor}
\usepackage{amssymb}
\usepackage{todonotes}
\usepackage{makecell}
\usepackage{tablefootnote}
\usepackage{mathtools}
\usepackage{booktabs}

\usepackage{thmtools}
\usepackage{thm-restate}
\newtheorem{theorem}{Theorem}
\newtheorem{lemma}[theorem]{Lemma}

\theoremstyle{definition}
\newtheorem{definition}{Definition}

\newtheorem{example}{Example}
\crefname{theorem}{Theorem}{Theorems}
\Crefname{theorem}{Thm}{Thms}
\crefname{lemma}{Lemma}{Lemmas}
\Crefname{lemma}{Lem}{Lems}
\newtheorem{corollary}[theorem]{Corollary}
\newtheorem{proposition}[theorem]{Proposition}

\crefname{proposition}{Proposition}{Propositions}
\Crefname{proposition}{Prop}{Props}

\newtheorem{observation}[theorem]{Observation}
\newtheorem{problem}{Problem}
\newtheorem{remark}{Remark}

\title{Envy-free Allocations with Individual Payments}

\author[1]{Robert Bredereck}
\author[1]{Eva Deltl}
\author[2]{Tanmay Inamdar}
\author[2]{Pallavi Jain}
\author[2]{Pranjal Pandey}

\affil[1]{TU Clausthal, Germany}
\affil[2]{IIT Jodhpur}

\date{} 

\begin{document}
\maketitle

\newcommandx{\problemdef}[7][3=Input,5=Question]{
  \begingroup\par\noindent\nopagebreak[4]
  \begin{problem}\label{prob:#2}\vspace{-0.5em}\hspace{-0.5em}\colorbox{white}{\begin{tabular}{@{}l@{}}\textsc{#1}  \textsc{#7}\end{tabular}}\nopagebreak[4]\end{problem}\nopagebreak[4]
  \par\noindent\hangindent=\parindent\textbf{#3}:  #4\nopagebreak[4]
  \par\noindent\hangindent=\parindent\textbf{#5}:  #6
  \par\medskip
  \endgroup
}

\newcommand{\pjmargin}[1]{\todo[color=red!20!white]{\textsc{PJ:} #1}}
\newcommand{\timargin}[1]{\todo[color=red!20!white, size=tiny]{\textsc{TI:} #1}}
\newcommand{\pjinline}[1]{\todo[inline, color=red!20!white]{\textsc{PJ:} #1}}
\newcommand{\edinline}[1]{\todo[inline, color=blue!20!white]{\textsc{ED:} #1}}
\newcommand{\edmargin}[1]{\todo[color=blue!20!white]{\textsc{ED:} #1}}
\newcommand{\ppinline}[1]{\todo[inline, color=green!20!white]{\textsc{PP:} #1}}
\newcommand{\rbinline}[1]{\todo[inline, color=yellow!20!white]{\textsc{Rb:} #1}}
\newcommand{\rbmargin}[1]{\todo[color=yellow!20!white]{\textsc{RB:} #1}}

\newcommand{\N}{\mathbb{N}} 

\begin{abstract}
When an envy-free allocation of indivisible goods does not exist, monetary transfers can restore envy-freeness. Existing work on fair division with subsidies, however, typically assumes that these payments are provided by an external source, an assumption that may be unrealistic in many applications. We address this limitation by allowing only monetary transfers between agents, with each agent’s payments constrained by their individual budget.
We show that while it is polynomial-time tractable to determine whether a given allocation can be made envy-free by payments under individual budgets, the general problem of computing such an allocation from scratch is NP-hard, even when agents have relatively large budgets.
Motivated by this intractability, we conduct a parameterized complexity analysis, establishing fixed-parameter tractability with respect to the number of goods or to the joint parameter number of agents and good types, and we provide efficient algorithms in several special cases. For explicitly listed items, our type-based algorithm answers the envy-freeness part of an open question of \citet{DBLP:conf/ijcai/0004R23}.
\end{abstract}

\newcommand{\appendixsection}[1]{%
  \gappto{\appendixProofText}{\section{#1}}
}
\newcommand{\appendixsubsection}[1]{%
  \gappto{\appendixProofText}{\subsection{Additional Material for Section~\ref{#1}}\label{app:#1}}
}
\newcommand{\toappendix}[1]{%
  \gappto{\appendixProofText}{{#1}}
}
\newcommand{\appendixproof}[2]{%
  \gappto{\appendixProofText}{\subsection{Proof of \Cref{#1}}\label{proof:#1}#2}
}
         
\newcommand{\BibTeX}{\rm B\kern-.05em{\sc i\kern-.025em b}\kern-.08em\TeX}


\maketitle 


\newcommand{\localBudget}{\textsc{Envy-free allocations with individual payments}}
\newcommand{\localBudgetAbrv}{\textsc{EFA-IP}}

\newcommand{\initialfair}{\textsc{Envy-free Extension with individual payments}}\newcommand{\initialfairAbrv}{\textsc{EFE-IP}}

\newcommand{\R}{\ensuremath{\mathbb{R}}}
\newcommand{\Z}{\ensuremath{\mathbb{Z}}}

\newcommand{\allocation}{\mathcal{X}}

\section{Introduction}

Fair division addresses the problem of allocating indivisible resources to agents in a way that satisfies fairness and efficiency criteria \cite{DEMKO1988145,Brams_Taylor_1996}. A central notion in this domain is \emph{envy-freeness}, which requires that no agent prefer another agent’s allocation to their own. However, for indivisible goods, envy-freeness is often unattainable, e.g., when one good must be allocated between two agents.

A common approach to mitigate envy is to allow monetary compensations, including subsidies, donations, or side payments.
This leads to hybrid models that combine allocations of indivisible goods with monetary transfers, which can, in principle, resolve envy-freeness when it is otherwise impossible.
However, most models assume either a central authority that can redistribute funds or access external money (for example, via subsidies or charity), which is not always realistic.
In many real-world scenarios, transfers are made between agents and should be budget-balanced, meaning the sum of payments must be zero \cite{budget-feasible}.

Even under budget-balance, existing models such as the pricing framework of \citet{AzizPricing} typically assume agents can make unrestricted payments. 
This overlooks a key constraint present in most real-world settings: agents often have unequal initial capital and therefore cannot make arbitrary monetary transfers. For instance, in inheritance division or in the division of jointly owned assets,
individuals may differ significantly in what they can afford. Ignoring such constraints can lead to allocations that are fair in theory but infeasible in practice.

To address this, we introduce a fair division framework that incorporates \emph{agent-specific budget constraints}, modeled as hard upper bounds on the payments each agent can make. These constraints directly influence both feasibility of transfers and the structure of envy, yielding a more realistic model of fair allocation that better reflects real-world limitations.

\vspace{.5em}
\textbf{ Related work.}
The literature on fair division has explored various mechanisms to restore fairness when envy-freeness cannot be achieved via allocation alone. 

One approach is to relax the notion of envy-freeness. Prominent relaxations include EF1 (envy-freeness up to one good) \cite{Ef1} and EFx (envy-freeness up to any good) \cite{EFx}; see \citet{AMANATIDIS2023103965} for a survey. 
However, when monetary transfers are permitted, even if constrained, the pursuit of exact envy-freeness is both natural and practically justified. We therefore do not study these relaxations.

Another line of work focuses on modifying the instance, for example by adding or removing items to enable a fair allocation \cite{aziz2016control,DBLP:journals/algorithmica/DornHS21,boehmer2024multivariate,ijcai2025p417}.

Our focus is on models that introduce divisible resources, such as money, to compensate agents and eliminate envy. 
This idea is rooted in the classical rent division problem, where $n$ agents are each assigned one of $n$ indivisible rooms and share the rent in a way that guarantees envy-freeness \cite{money2,haake2002bidding,money}.
More generally, when the number of items exceeds the number of agents, monetary transfers or subsidies can help restore envy-freeness. A common goal is to find the minimal set of payments that eliminates all envy.
\citet{DBLP:conf/sagt/HalpernS19} introduce a framework for quantifying the price of envy-freeness, analyzing how much subsidy is needed to eliminate envy. \citet{SubsidiesGoko} take a mechanism-design perspective and analyze the subsidies needed to ensure truthfulness, envy-freeness, and utilitarian optimality.

These monetary models can be distinguished along three axes: (i) allocation structure (one-to-one vs.\ one-to-many), (ii) whether transfers must be budget-balanced or may rely on external subsidies, and (iii) whether transfers are unrestricted or subject to individual budgets. Most recent work has focused on the budget-balanced but unrestricted case in the one-to-many setting, designing transfer rules to guarantee envy-freeness while maximizing welfare or fairness \cite{AzizPricing,aziz2025weighted,twobirds}. These models typically assume unconstrained internal redistribution, meaning agents can transfer money freely among themselves as long as the overall budget is balanced. More generally, \citet{oneDollar} showed that for additive valuations, scaling item values so that no marginal contribution exceeds one dollar, an envy-free allocation always exists with subsidies of at most one dollar per agent. Recently, \citet{lu2026optimalsubsidyboundsgoods} extended this tight bound to mixed settings with
goods and chores. Subsidies have also been studied for agents with unequal entitlements:
\citet{aziz2025weighted} and \citet{kleinelmalem2025weighted} consider
\emph{weighted} envy-freeness with subsidies and establish bounds on the
subsidy required. As in the unweighted case, these models rely on an external
source of money, whereas we require budget-balanced transfers bounded by
individual budgets.

The idea of individual budget constraints is well-studied in the auction
and mechanism design literature
\cite{budgetconstraint,budgets,borgs2005multi,lerner2013dictatorial},
where a bidder's budget typically represents the maximum price they can pay
to the seller. For instance, \citet{borgs2005multi} show that no
non-trivial truthful multi-unit auction with budget-constrained bidders can
allocate all items, and \citet{lerner2013dictatorial} show that in
constrained combinatorial auctions, incentive compatibility together with
Pareto efficiency essentially forces dictatorial mechanisms.
There, envy arises only when another bidder obtains a good at a price the
envious bidder could have matched. In contrast, our setting involves direct
transfers between agents with no seller collecting revenue, where both
the direction and size of payments can themselves generate envy.

Closer to our setting, 
\citet{Procaccia_Velez_Yu_2018} introduced agent-specific budget constraints in the context of envy-free rent division, but restricted attention to one-to-one allocations (each agent receives exactly one room). 
In the one-to-many setting, \citet{budget_chores} proposed a budget-constrained model for allocating indivisible chores and analyze how goods and chores differ.
Similarly, \citet{Barman_Khan_Shyam_Sreenivas_2023} 
studied a setting in which each good has an agent-specific value and size, and every agent has a budget constraint that limits the total size of the goods she can receive.
However, their notion of budget differs from ours. We interpret an agent’s budget as the maximum payment it can make and incorporate this directly into our envy relations, whereas they assign weights to chores and define budget as an upper bound on the total weight an agent can receive.

Finally, \citet{bilo_et_al:LIPIcs.ESA.2024.26} also recognized that assuming access to external money is often unrealistic in real-world settings. They proposed an interesting approach in which certain goods are sold, and the compensation is used to achieve envy-freeness through monetary transfers.

Our contribution complements these works by introducing the \localBudget{} model (\localBudgetAbrv{}, defined in Section~\ref{sec:prelims}), allowing us to analyze one-to-many allocations under budget-balance with agent-specific budgets, and filling a gap in the existing taxonomy of fair division with money.

\vspace{.5em}
\textbf{ Our contributions.}
We study the computational complexity of \localBudgetAbrv{} (defined in Section~\ref{sec:prelims}), which asks whether there exists an envy-free allocation with payments such that transfers are budget-balanced and respect each agent’s budget constraint. We focus on agents with additive, non-negative valuations. 
Figure~\ref{table:params} summarizes our results.

\begin{figure*}[t]
\centering

\resizebox{\textwidth}{!}{%
\begin{tikzpicture}[
  every node/.style={
    font=\fontsize{8.5pt}{9.5pt}\selectfont
  },
  hdr/.style={
    draw=gray!70,
    fill=gray!20,
    rounded corners=2pt,
    align=center,
    minimum height=0.48cm,
    font=\fontsize{8.5pt}{9.5pt}\selectfont\bfseries,
    inner sep=2pt
  },
  pnode/.style={
    draw=gray!55,
    fill=white!9,
    rounded corners=2pt,
    align=center,
    inner sep=2pt
  },
  gbox/.style={
    draw=green!55!black,
    fill=green!9,
    rounded corners=2pt,
    align=center,
    inner sep=2pt
  },
  rbox/.style={
    draw=red!65!black,
    fill=red!5,
    rounded corners=2pt,
    align=center,
    inner sep=2pt
  },
  bbox/.style={
    draw=orange!55!black,
    fill=orange!8,
    rounded corners=2pt,
    align=center,
    inner sep=2pt
  },
]


\def\xP{1.1}
\def\wP{2.2cm}

\def\xC{4.2}
\def\wC{3.7cm}

\def\xI{8.5}
\def\wI{4.60cm}

\def\xG{14.3}
\def\wG{6.70cm}

\def\xSpanThreeFour{11.925}
\def\wSpanThreeFour{11.45cm}


\def\hHdr{0.48cm}
\def\hN{1.00cm}
\def\hM{0.55cm}
\def\hNL{0.50cm}
\def\hNV{0.50cm}
\def\hTL{0.75cm}
\def\hBV{0.55cm}


\def\yHdr{0}
\def\yN{-0.79}
\def\yM{-1.57}
\def\yNL{-2.12}
\def\yNV{-2.68}
\def\yTL{-3.365}
\def\yBV{-4.075}


\node[
  hdr,
  text width=\wP
] at (\xP,\yHdr)
{Parameter};

\node[
  hdr,
  text width=\wC
] at (\xC,\yHdr)
{Classification};

\node[
  hdr,
  text width=\wI
] at (\xI,\yHdr)
{Identical valuations};

\node[
  hdr,
  text width=\wG
] at (\xG,\yHdr)
{General valuations};


\node[
  pnode,
  text width=\wP,
  minimum height=\hN
] at (\xP,\yN)
{$n$};

\node[
  rbox,
  text width=\wC,
  minimum height=\hN
] at (\xC,\yN)
{%
  \textbf{para-NP-hard} (binary)\\[-1pt]
  \textbf{$W[1]$-hard} (unary)\\[-1pt]
  [Thm.~\ref{thm:npw}]
};

\node[
  pnode,
  text width=\wI,
  minimum height=\hN
] at (\xI,\yN)
{--};

\node[
  pnode,
  text width=\wG,
  minimum height=\hN
] at (\xG,\yN)
{--};


\node[
  pnode,
  text width=\wP,
  minimum height=\hM
] at (\xP,\yM)
{$m$};

\node[
  gbox,
  text width=\wC,
  minimum height=\hM
] at (\xC,\yM)
{\textbf{FPT}};

\node[
  gbox,
  text width=\wI,
  minimum height=\hM
] at (\xI,\yM)
{%
  $\mathrm{poly}(n,m)\cdot 2^m$
  [Thm.~\ref{thm:fptm}]
};

\node[
  gbox,
  text width=\wG,
  minimum height=\hM
] at (\xG,\yM)
{%
  $\mathrm{poly}(n,m)\cdot m^{2m}$
  [Thm.~\ref{thm:m}]
};


\node[
  pnode,
  text width=\wP,
  minimum height=\hNL
] at (\xP,\yNL)
{$n+\lambda$};

\node[
  gbox,
  text width=\wC,
  minimum height=\hNL
] at (\xC,\yNL)
{\textbf{FPT}};

\node[
  gbox,
  text width=\wSpanThreeFour,
  minimum height=\hNL
] at (\xSpanThreeFour,\yNL)
{%
  \emph{MILP-solvable}
  [Thm.~\ref{thm:fptnt}]
};


\node[
  pnode,
  text width=\wP,
  minimum height=\hNV
] at (\xP,\yNV)
{$n+v_{\max}$};

\node[
  gbox,
  text width=\wC,
  minimum height=\hNV
] at (\xC,\yNV)
{\textbf{FPT}};

\node[
  gbox,
  text width=\wSpanThreeFour,
  minimum height=\hNV
] at (\xSpanThreeFour,\yNV)
{%
  \emph{MILP-solvable}
  [Cor.~\ref{thm:value}]
};


\node[
  pnode,
  text width=\wP,
  minimum height=\hTL
] at (\xP,\yTL)
{$\tau+\lambda$};

\node[
  bbox,
  text width=\wC,
  minimum height=\hTL
] at (\xC,\yTL)
{\textbf{XP}\footnotemark};

\node[
  bbox,
  text width=\wI,
  minimum height=\hTL
] at (\xI,\yTL)
{%
  $\mathrm{poly}(n,m)
  \big(1+\tfrac{m}{\lambda}\big)^{2\lambda}$
  [Thm.~\ref{thm:types}]
};

\node[
  bbox,
  text width=\wG,
  minimum height=\hTL
] at (\xG,\yTL)
{%
  $\mathrm{poly}(n,m)
  (1+nb_{\max})^{\tau}
  \big(1+\tfrac{m}{\lambda}\big)^{\lambda(\tau+1)}$
  [Thm.~\ref{thm:more-types}]
};


\node[
  pnode,
  text width=\wP,
  minimum height=\hBV
] at (\xP,\yBV)
{$b_{\max}+v_{\max}$};

\node[
  rbox,
  text width=\wC,
  minimum height=\hBV
] at (\xC,\yBV)
{%
  \textbf{para-NP-hard}
  [Thm.~\ref{thm:npw}]
};

\node[
  pnode,
  text width=\wI,
  minimum height=\hBV
] at (\xI,\yBV)
{--};

\node[
  pnode,
  text width=\wG,
  minimum height=\hBV
] at (\xG,\yBV)
{--};

\end{tikzpicture}%
}

\caption{Overview of our results for \localBudgetAbrv{}. Here, $n$, $m$, $\tau$, and $\lambda$
denote the numbers of agents, goods, agent types, and good types,
respectively; $b_{\max}$ and $v_{\max}$ are the maximum budget and good valuation.}

\label{table:params}

\end{figure*}

\footnotetext{For general valuations, payments between groups are assumed to be integral; the XP classification assumes unary-encoded budgets.}

A key difference from other monetary models is that with individual budget constraints, an envy-free allocation with payments may fail to exist. This is clear in the case where all agents have zero budget, which reduces to finding a classical envy-free allocation (Thm~\ref{thm:npw}), known not to always exist.

On the positive side, we show that checking whether a given allocation
can be made envy-free using individual payments is solvable in polynomial time
by modeling it as a linear program (LP) (Thm~\ref{thm:lp}).
However, aligning with the hardness of finding envy-free allocations in the classical setting, computing such an allocation from scratch is NP-hard. More generally, we characterize the role of the budget through a sharp complexity threshold: the problem is polynomial-time solvable when every agent has budget $b_a=\max_{g\in G} v_a(g)$ (Lem~\ref{subsidy}), but becomes NP-hard when the budget is decreased by one to $b_a=\max_{g\in G} v_a(g)-1$ (Thm~\ref{thm:np}).

Motivated by this computational intractability, which already holds when agents have binary valuations, we initiate a parameterized complexity analysis of our problem.

We present an algorithm that achieves fixed-parameter tractability (FPT) with respect to the number of goods under both identical and general valuations (Thms~\ref{thm:fptm} and~\ref{thm:m}).
Additionally, we show that the problem is FPT when jointly parameterized by the number of agents $n$ and the number of good types $\lambda$ (Thm~\ref{thm:fptnt}), while $W[1]$-hard for only $n$ (Thm~\ref{thm:npw}).

Finally, assuming integral payments that are polynomially bounded in the input size, we give an XP algorithm parameterized by the numbers of agent types~$\tau$ and good types~$\lambda$. For explicitly listed goods, this answers the envy-freeness part of an open question posed by~\citet{DBLP:conf/ijcai/0004R23}.

\section{Preliminaries}
\label{sec:prelims}

For~$n \in \N$,\footnote{We use the convention $\N=\{0,1,2,\dots\}$.} with $n > 0$
let~$[n] \coloneq \{1, 2, \dots, n\}$.

Let $G = \{g_1, \dots, g_m\}$ be the set of \emph{goods} and let $A = \{a_1, \dots, a_n\}$ denote the set of \emph{agents}. 
Each agent $a \in A$ has a budget $b_a \in \N$ and a valuation function $v_a \colon \mathcal{P}(G) \to \N$. With slight abuse of notation, we write $v_a(g) = v_a(\{g\})$. We assume valuations are \emph{additive}, i.e., $v_a(S) \coloneqq \sum_{g \in S} v_a(g)$ for all $S \subseteq G$.
We call a valuation $v_a$ \emph{binary} if for all $g \in G$ we have $v_a(g) \in \{0,1\}$, and say that $a$ approves good $g$ if $v_a(g) = 1$.
Valuations $v_a$ and $v_{a'}$ are \emph{identical} if $v_a = v_{a'}$.
We group agents and goods by identical valuations.
Let $A = A_1 \uplus \dots \uplus A_\tau$, where $a, a' \in A_t$ if and only if $v_a = v_{a'}$; agents in $A_t$ are said to be of \emph{type}~$t$.
Similarly, let $G = G_1 \uplus \dots \uplus G_\lambda$, where $g, g' \in G_\ell$ if and only if $v_a(g) = v_a(g')$ for all $a \in A$; goods in $G_\ell$ are of \emph{type}~$\ell$.

An \emph{allocation} is a partition $\allocation = (X_a)_{a\in A}$ of $G$ into $n$ disjoint bundles,  one for each agent. Note that we consider complete allocations, and the bundles $X_i$ are allowed to be empty.
Agent $a$ \emph{(initially) envies} agent $a'$ if $v_a(X_a) < v_a(X_{a'})$.

An \emph{allocation with payments} is a pair $(\allocation, \mathbf{p})$, where $\allocation$ is an allocation and $\mathbf{p} \in \mathbb{R}^n$ is a vector of payments. The \emph{effective utility} of agent $a$ is
$u_a(\allocation, \mathbf{p}) = v_a(X_a) + p_a$,
where $p_a$ is the \emph{net payment} made (if negative) or received (if positive) by agent $a$. 
That is, agents have \emph{quasi-linear} utilities, meaning their utility is linear in money and equals the value of the allocated bundle plus the net payment. 
The \emph{utilitarian welfare} of an allocation with payments $(\allocation, \mathbf{p})$ is defined as the sum of effective utilities across all agents, i.e., $\sum_{a \in A} u_a(\allocation, \mathbf{p})$.
Given two agents $a, a' \in A$, we say that $a$ \emph{envies} $a'$ under $(\allocation, \mathbf{p})$ if $v_a(X_a) + p_a < v_a(X_{a'}) + p_{a'}$.

An allocation with payments is \emph{envy-free} if no agent envies another under $(\allocation, \mathbf{p})$.
We say that an allocation $\allocation$ is \emph{envy-freeable} if there exists a payment vector $\mathbf{p}$ such that $(\allocation, \mathbf{p})$ is envy-free. We then say that $\mathbf{p}$ \emph{eliminates envy}.

A payment vector $\mathbf{p}$ is \emph{feasible} if $p_a \geq -b_a$ for all $a \in A$, i.e., no agent's net payment exceeds their budget, and \emph{budget-balanced} if $\sum_{a \in A} p_a = 0$, i.e., total payments sum to zero, meaning all transfers occur internally between agents. 

A pair $(\allocation, \mathbf{p})$ is a \emph{budget-feasible allocation with payments}
if $\mathbf{p}$ is both feasible and budget-balanced.
We study the computational problem \localBudget{} (\localBudgetAbrv{}) of deciding whether there exists a budget-feasible envy-free allocation with payments.

\problemdef{\localBudgetAbrv{}}{decide}{
	A set $G$ of $m$ goods, a set $A$ of $n$ agents, each with a valuation function $v_a \colon \mathcal{P}(G) \to \N$ and a budget $b_a \in \N$.
}{
	Does there exist a budget-feasible allocation with payments $(\allocation,\mathbf{p})$ that is envy-free?
}{}

We adopt standard notations from \emph{parameterized complexity} and algorithm design as introduced by \citet{dblp1894779}, \citet{FELLOWS200953}, \citet{Nie06}, and \citet{doi:10.1137/S0097539703427203}.
A \emph{parameterized problem} is a subset $L \subseteq \Sigma^* \times \mathbb{N}$, where $\Sigma$ is a finite alphabet. The integer $k$ is called the \emph{parameter} of an instance $(x, k)$. The problem $L$ is \emph{fixed-parameter tractable} (FPT) if there exists an algorithm that decides whether $(x, k) \in L$ (a \texttt{yes}-instance) or $(x, k) \notin L$ (a \texttt{no}-instance) in time $f(k) \cdot |x|^{\mathcal{O}(1)}$, where $f$ is a computable function depending only on $k$. The class of all such problems is denoted by FPT. The class XP consists of parameterized problems solvable in time $|x|^{f(k)}$ for some computable function $f$. A problem is \emph{para-NP-hard} with respect to a parameter $k$ if it remains NP-hard even when $k$ is fixed to a constant.

\section{Initial Observations}
\label{sec:observations}

Before tackling the general problem of computing a fair allocation from scratch under individual budget constraints, we begin with the simpler question of whether a given allocation can be made envy-free through payments that respect individual budgets. We show that this variant can be solved in polynomial time using linear programming.

Next, we examine the general case by analyzing how tighter individual budget constraints affect feasibility and fairness. While unconstrained or sufficiently large budgets recover tractable cases from prior work, even minor restrictions can lead to infeasibility or computational hardness.

\subsection{When the Allocation is Given}

We first study \initialfair{} (\initialfairAbrv), the variant in which an allocation is given.

\problemdef{\initialfairAbrv}{init}{
	A set $G$ of $m$ goods, a set $A$ of $n$ agents, each with a valuation function $v_a \colon \mathcal{P}(G) \to \N$ and a budget $b_a \in \N$ as well as an initial allocation $\allocation$.
}{
	Does there exist a payment vector $\mathbf{p}$ such that $(\allocation, \mathbf{p})$ is budget-feasible and envy-free?
}{}

\begin{theorem}\label{thm:lp}
    \initialfairAbrv \ is solvable in polynomial time.
\end{theorem}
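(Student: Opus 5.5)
We formulate \initialfairAbrv{} as the feasibility question for a linear program whose variables are the payments $p_a$, $a \in A$. Since the allocation $\allocation = (X_a)_{a\in A}$ is given, every quantity $v_a(X_{a'})$ is a constant. By additivity, each such constant can be computed in time $O(m)$ from the input. Hence all $n^2$ values $v_a(X_{a'})$ are available in polynomial time.

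The program consists of three families of constraints. First, the envy-freeness constraints
\[
v_a(X_a) + p_a \ \ge\ v_a(X_{a'}) + p_{a'} \qquad \text{for all } a,a' \in A,
\]
which are equivalent to $p_{a'} - p_a \le v_a(X_a) - v_a(X_{a'})$. Second, the feasibility constraints $p_a \ge -b_a$ for all $a \in A$. Third, the budget-balance equation $\sum_{a\in A} p_a = 0$. We use the zero objective, so the LP asks only for a feasible point.

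By definition, a payment vector $\mathbf{p}$ makes $(\allocation,\mathbf{p})$ budget-feasible and envy-free exactly when it satisfies all three families. The instance is therefore a yes-instance if and only if this LP is feasible. The LP has $n$ variables and $O(n^2)$ constraints, and every coefficient is either $0$, $\pm 1$, or an integer whose encoding length is polynomial in the input (a sum of at most $m$ input values, or a budget). Feasibility of an LP can be decided in time polynomial in its encoding length using the ellipsoid or an interior-point method. This gives the polynomial running time, and when the LP is feasible the solver also returns a rational witness $\mathbf{p}$.

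The only point needing care is that payments are real-valued in the model, so no integrality is required and no rounding issue arises. Equivalence of the problem with LP feasibility is immediate, and nothing beyond standard LP solvability is needed.

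As an optional cross-check, one could also solve the problem combinatorially. Consider the envy graph with arc weights $w(a,a') = v_a(X_{a'}) - v_a(X_a)$. A standard fact is that $\allocation$ is envy-freeable iff this graph has no positive-weight cycle. In that case the minimal eliminating payments are the maximum path weights $\ell_a$ starting at $a$. One would then shift these payments by a common constant to achieve budget balance and check the budgets. That argument needs extra work to show the shifted minimal payments are optimal with respect to the budgets, so we rely on the LP argument, which is complete as it stands.
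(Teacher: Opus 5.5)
Your proposal is correct and is essentially the paper's proof: the same LP over the payment variables with the envy-freeness, budget, and balance constraints, decided by a polynomial-time LP algorithm (you also spell out the encoding-length bound, which the paper leaves implicit). You were right not to rely on the optional cross-check, because shifting the minimal subsidies by a common constant fails outright rather than merely needing extra work: with $X_1=\{g\}$, $X_2=\{h\}$, $X_3=\emptyset$, $v_1(g)=v_2(h)=v_3(h)=2$ (all other values $0$) and $\mathbf{b}=(0,1,0)$, the minimal subsidies $(0,0,2)$ shift to $(-\tfrac23,-\tfrac23,\tfrac43)$, which violates $b_1=0$, while $\mathbf{p}=(0,-1,1)$ is budget-feasible and envy-free.
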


\begin{proof}
Given an initial allocation $\mathcal{X}=(X_a)_{a\in A}$, we want to determine whether there exists a payment vector $\mathbf{p} \in \R^n$ satisfying the following constraints:
\begin{align*}
 &v_a(X_a) + p_a \geq v_a(X_{a'}) + p_{a'} && \forall a, a' \in A \\
 &p_a \geq -b_a && \forall a \in A \\
 &\sum_{a\in A} p_a = 0 &&
\end{align*}
Since the number of variables and constraints in the linear program (LP) is polynomial in $n$, and all coefficients are integers, 
feasibility can be decided in polynomial time using standard LP solvers.
Thus, \initialfairAbrv \ is solvable in polynomial time when $\allocation$ is given.
\end{proof}

\subsection{The role of the budget}\label{budget}

We next investigate how the individual budget constraints affect both the existence and the computational complexity of envy-free outcomes. 

We begin by recalling a characterization of envy-freeable allocations established by
\citet{DBLP:conf/sagt/HalpernS19}.

\begin{proposition}[\citet{DBLP:conf/sagt/HalpernS19}]
\label{thm:halpern}
Under positive additive utilities, the following statements are equivalent for an allocation $\allocation$:
\begin{itemize}
    \item[(a)] $\allocation$ is envy-freeable.
    \item[(b)] $\allocation$ maximizes the utilitarian welfare over all permutations of its bundles, i.e., for every permutation $\sigma$ of $A$, it holds that $\sum_{a \in A} v_a(X_a) \geq \sum_{a \in A} v_a(X_{\sigma(a)})$.
\end{itemize}
\end{proposition}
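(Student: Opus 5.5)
The plan is to prove the two implications separately, using the standard \emph{envy graph} of $\allocation$. This is the complete directed graph on $A$ in which the arc $(a,a')$ has weight $w(a,a') = v_a(X_{a'}) - v_a(X_a)$. Envy-freeness of $(\allocation,\mathbf{p})$ is exactly the system of difference constraints
\[
p_{a'} - p_a \le v_a(X_a) - v_a(X_{a'}) = -w(a,a') \quad \text{for all } a,a' \in A .
\]
Here no budget or balance constraints are involved, since envy-freeability in \Cref{thm:halpern} refers only to the existence of some $\mathbf{p}\in\R^n$. Condition (b) will be translated into the statement that this graph has no cycle of positive weight. The Bellman--Ford-type argument for difference constraints then gives the equivalence.

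\emph{(a)$\Rightarrow$(b).} Fix payments $\mathbf{p}$ eliminating envy and a permutation $\sigma$ of $A$. For every $a$ we have $v_a(X_a)+p_a \ge v_a(X_{\sigma(a)})+p_{\sigma(a)}$. Summing over all $a\in A$, the payment terms cancel because $\sigma$ is a bijection, so $\sum_a p_{\sigma(a)}=\sum_a p_a$. This yields $\sum_a v_a(X_a)\ge\sum_a v_a(X_{\sigma(a)})$, which is (b).

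\emph{(b)$\Rightarrow$(a).} First, I show that (b) forbids positive-weight cycles. Any directed cycle $a_1\to a_2\to\dots\to a_k\to a_1$ of distinct agents induces the permutation $\sigma$ with $\sigma(a_i)=a_{i+1}$ (indices mod $k$) and fixing all other agents. The cycle weight equals $\sum_a v_a(X_{\sigma(a)})-\sum_a v_a(X_a)\le 0$. Next, define $\ell(a)$ as the maximum weight of a path in the envy graph starting at $a$, with the empty path allowed, so $\ell(a)\ge 0$. Since there are no positive cycles, maxima over walks are attained by simple paths, so $\ell$ is finite and well defined. Set $p_a := \ell(a)$. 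For any arc $(a,a')$, prepending it to a best path from $a'$ gives a walk from $a$ of weight $w(a,a')+\ell(a')$. Such a walk is bounded by $\ell(a)$, because removing any closed subwalk does not decrease the weight. Hence $\ell(a)\ge w(a,a')+\ell(a')$, i.e.\ $v_a(X_a)+p_a\ge v_a(X_{a'})+p_{a'}$, so $\mathbf{p}$ eliminates all envy.

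The only delicate step is the one just used: a walk's weight is at most that of a simple path with the same start. This follows by repeatedly deleting closed subwalks, each of which is a cycle or decomposes into cycles of non-positive weight. Positivity of the utilities is not actually needed for this argument; it only matters for the original paper's additional statements. The resulting payments are nonnegative subsidies. Budget-balance and budget constraints are irrelevant to this proposition, though the payments can be shifted by a common constant to make them sum to zero without affecting envy.
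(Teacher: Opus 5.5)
Your proof is correct and follows essentially the same route the paper relies on. The paper only cites this proposition from Halpern and Shah, but it later invokes exactly your intermediate step (reading a cycle in the envy graph as a permutation of bundles, so that (b) rules out positive envy cycles). It also uses your payment construction $q_a$, the maximum weight of a simple path starting at $a$, which satisfies $q_a \ge w_{\mathcal X}(a,a') + q_{a'}$.
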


Moreover, they show that an envy-freeable allocation always exists, and thus envy-freeness can always be attained through suitable subsidy payments.
Note that subtracting the average subsidy from each agent’s payment leads to payment transfers that are budget-balanced.
While their model permits unconstrained monetary transfers, our setting imposes agent-specific budget constraints, making it more restrictive. As a result, any necessary condition in their model must also hold in ours, implying the following: 

\begin{observation}
If there exists a payment vector under which an allocation $\allocation$ is budget-feasible and envy-free, then it must be envy-freeable.
\end{observation}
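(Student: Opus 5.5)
The observation follows by direct unpacking of the definitions; no appeal to the characterization in \Cref{thm:halpern} is needed.

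Suppose $(\allocation,\mathbf{p})$ is budget-feasible and envy-free. I would simply take $\mathbf{p}$ itself as the witness that $\allocation$ is envy-freeable. Envy-freeness of $(\allocation,\mathbf{p})$ means that for all $a,a'\in A$ we have $v_a(X_a)+p_a \ge v_a(X_{a'})+p_{a'}$. This is exactly the condition that $\mathbf{p}$ eliminates envy. Hence $\allocation$ is envy-freeable by definition.

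The budget constraints $p_a\ge -b_a$ and the balance condition $\sum_a p_a=0$ are additional requirements that are simply dropped. The set of admissible payment vectors in our model is therefore a subset of those allowed in the unconstrained (subsidy) model of \citet{DBLP:conf/sagt/HalpernS19}. In particular, their notion allows arbitrary real payments; nonnegativity of subsidies is irrelevant, since adding a common constant to all $p_a$ preserves envy-freeness.

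As a remark, I would combine this with \Cref{thm:halpern} to obtain the usable necessary condition: $\allocation$ must maximize utilitarian welfare over all permutations of its bundles. I would flag only one subtlety. \Cref{thm:halpern} is stated for positive utilities, whereas our valuations may be zero. However, the observation itself concerns envy-freeability, which is defined directly, so this subtlety affects only the corollary and not the proof. There is no real obstacle here.
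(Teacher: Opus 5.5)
Your proposal is correct and follows essentially the same reasoning as the paper: the budget-feasible payment vector $\mathbf{p}$ is itself a payment vector that eliminates envy, so it witnesses envy-freeability once the budget and balance constraints are dropped. The paper phrases this as the budget-constrained model being a restriction of the unconstrained subsidy model, so that every necessary condition of the unconstrained model carries over.
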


However, envy-freeability alone does not imply budget-feasibility, as demonstrated in the following example.

\begin{example}\label{ex:notfreeable}
Let $n+1$ agents each value a single good $g$ at $n$, and suppose
that each agent has budget $n-1$. Allocate $g$ to an arbitrary agent
$a$. Since all agents have identical valuations, this allocation
maximizes utilitarian welfare among all permutations of the bundles
and is therefore envy-freeable. Furthermore, in any budget-feasible outcome, agent $a$'s utility is at least $v_a(g)+p_a\ge n-b_a=1$.
Envy-freeness therefore requires each of the remaining $n$ agents,
who receive empty bundles, to receive a payment of at least $1$.
This requires a total payment of at least $n$, whereas agent $a$ can
pay at most $n-1$, contradicting budget balance.
\end{example}

The issue is that envy-eliminating payments may exceed agents' budgets, and natural allocation rules can fail.

\begin{example}
Consider three agents and three goods with valuations and budgets as follows:
\[
\begin{array}{c|ccc|c}
 & x_1 & x_2 & x_3 & b_a \\ \hline
1 & 4 & 1 & 1 & 2 \\
2 & 6 & 2 & 2 & 0 \\
3 & 4 & 1 & 1 & 0
\end{array}
\]

The highest-value rule assigns all goods to agent~$2$, but this allocation cannot be made envy-free because agent~$2$ has no budget to compensate agents~$1$ and~$3$. An equal split can also not be made envy-free. In contrast, assigning $x_1$ to agent~$1$ and $\{x_2,x_3\}$ to agent~$2$, together with payments $p=(-2,0,2)$, yields a budget-feasible envy-free outcome.
\end{example}

\begin{observation}\label{notenvy-freeable}
If an allocation $\allocation$ is envy-freeable but the LP of \cref{thm:lp} is infeasible, then no envy-free payment vector can simultaneously satisfy all agents’ budget constraints.
\end{observation}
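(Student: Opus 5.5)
This observation follows directly from how the LP in the proof of \cref{thm:lp} is built, so the plan is to argue by contraposition. Suppose some payment vector $\mathbf{p}$ makes $(\allocation,\mathbf{p})$ envy-free and also satisfies every budget constraint $p_a \ge -b_a$. Since the problem always asks for budget-balanced transfers, we also take $\sum_{a\in A} p_a = 0$. We then show that $\mathbf{p}$ is a feasible point of the LP, which contradicts the assumed infeasibility.

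The first step is to check the LP constraints one by one.
\begin{itemize}
\item Envy-freeness of $(\allocation,\mathbf{p})$ means $v_a(X_a)+p_a \ge v_a(X_{a'})+p_{a'}$ for all $a,a'\in A$. These are exactly the first family of LP constraints.
\item Budget-feasibility gives $p_a\ge -b_a$ for all $a$, which is the second family.
\item Budget balance gives $\sum_a p_a=0$, which is the last constraint.
\end{itemize}
Hence $\mathbf{p}$ lies in the LP's feasible region, contradicting infeasibility. Therefore no envy-eliminating payment vector can respect all budgets (together with budget balance).

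The envy-freeability hypothesis is not used in this logical argument. It only explains why the statement is informative: by \cref{thm:halpern}, envy-eliminating payments do exist once budgets are ignored, and after shifting by the average, budget-balanced ones exist too. So infeasibility of the LP can only come from the budget constraints.

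The one point needing care is the meaning of ``satisfy all agents' budget constraints.'' Throughout the paper, budget-feasible means both feasible and budget-balanced, so the argument should state explicitly that balance is part of the requirement. Without balance, the LP minus that equation could be feasible, for example by letting everyone receive large payments. With that reading fixed, the remaining steps are routine.
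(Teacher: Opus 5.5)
Your argument is correct and matches the paper's: infeasibility of the LP from \cref{thm:lp} directly excludes any envy-free, budget-balanced $\mathbf{p}$ with $p_a \ge -b_a$ for all $a$. As in the paper, envy-freeability (via \cref{thm:halpern}, shifted by the average) only shows that envy-eliminating budget-balanced payments exist, so the failure must come from the budgets. Your remark that budget balance has to be part of the requirement is a worthwhile clarification. Without it, the nonnegative subsidies $q_a \ge 0$ would always satisfy $p_a \ge -b_a$, and the statement would be false.
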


Since $\allocation$ is envy-freeable, there exists a payment vector that renders all agents envy-free. As discussed above, this payment can be adjusted to form a set of budget-balanced transfer payments. However, since the LP is infeasible, no payment vector can make $(\allocation,\mathbf{p})$ both envy-free and budget-feasible. Thus, every envy-eliminating payment vector requires at least one agent to pay more than their budget allows.

Nevertheless, when budgets are sufficiently large, they no longer restrict feasible allocations, so the models 
align:

\begin{lemma}\label{subsidy}
If each agent's budget satisfies
$b_a \geq \bigl(1-\tfrac{1}{n}\bigr)\max_{g\in G} v_a(g)$,
then there exists a budget-feasible allocation with payments that is envy-free, i.e.,
the instance is a \texttt{yes}-instance of \localBudgetAbrv{}. Moreover, such an allocation and payments can be computed in polynomial time.
\end{lemma}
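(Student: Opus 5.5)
The plan is to combine the subsidy framework of \citet{DBLP:conf/sagt/HalpernS19} with the ``one dollar each'' algorithm of \citet{oneDollar}, and then convert the resulting subsidies into budget-balanced transfers by subtracting their average. There are three steps: construct an envy-freeable allocation with small subsidies, shift the subsidies so they sum to zero, and check the budget constraints.

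\emph{Step 1: allocation and subsidies.} I would build $\allocation$ with the matching-round procedure of \citet{oneDollar}. In each round, every agent receives at most one good via a maximum-weight matching between agents and the remaining goods. The welfare-maximizing property of each round ensures that the envy graph of $\allocation$ has no positive-weight cycle. By \cref{thm:halpern}, $\allocation$ is therefore envy-freeable. The minimal envy-eliminating subsidies are
\[
s_a=\max\{\text{weight of a path starting at } a \text{ in the envy graph}\}.
\]
These satisfy $s_a\ge 0$, and $s_{a^*}=0$ for at least one agent $a^*$ (a source of maximum-weight paths).

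\emph{Step 2: per-agent subsidy bound.} The goal is $s_a\le V_a:=\max_{g\in G}v_a(g)$. I would obtain this by re-running the telescoping argument of \citet{oneDollar} with the per-agent valuation kept in place of the global normalization. Along a path $a=a_0\to a_1\to\dots\to a_k$, the edge weights $v_{a_i}(X_{a_{i+1}})-v_{a_i}(X_{a_i})$ are compared round by round using the matching optimality. Almost all terms cancel, leaving at most the value the first agent assigns to a single good. If this cancellation does not leave the first agent's valuation, I would try reversing the path orientation or using the last agent's valuation, adjusting the shift in Step~3 accordingly.

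\emph{Step 3: shifting, and the main obstacle.} Set $S=\sum_a s_a$ and $p_a=s_a-S/n$. These payments are budget-balanced and still eliminate envy, since envy-freeness is invariant under a common shift. Feasibility requires $S/n-s_a\le b_a$ for every $a$. Because $s_{a^*}=0$, we have $S\le\sum_{a\ne a^*}V_a$, and the hypothesis $b_a\ge(1-\tfrac1n)V_a$ would be exactly what is needed if all $V_a$ coincided.

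The hardest step is handling heterogeneous scales $V_a$. The averaging bound mixes other agents' valuations into agent $a$'s constraint, so plain averaging may not suffice. If it fails, I would drop averaging and instead decide feasibility directly via the LP of \cref{thm:lp} for this specific $\allocation$. To prove that the LP is feasible, I would exhibit a payment vector that transfers money only along the shortest-path structure of the envy graph, so that each agent $a$ pays at most $(1-\tfrac1n)V_a$. The idea is that each payer compensates at most $n-1$ others, each by at most $V_a/n$.

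Polynomial running time follows because the procedure performs at most $m$ matching rounds and one longest-path (equivalently, LP) computation.
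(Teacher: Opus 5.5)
Your Steps 2 and 3 do not go through, and the heterogeneous-scale obstacle you flag is real. It is not an artifact of your estimate.

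First, the Step 2 target $s_a\le V_a$ is false. Take a single round with goods $g_a,g_b,g_c$, and let $v_a(g_a)=v_a(g_b)=1$, $v_b(g_c)=5$, $v_c(g_c)=10$, with all other values $0$. The matching gives $g_c$ to $c$, and the path $a\to b\to c$ has weight $0+5$. Hence $s_a\ge 5>V_a=1$. Reversing the orientation or using the last agent's valuation only yields $s_a\le\max_{a'}V_{a'}$.

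Second, averaging the \emph{minimal} subsidies of the original envy graph can violate budgets even when your per-agent bound holds. Take three agents and three rounds:
\begin{itemize}
\item $a$ values $\alpha_1,\alpha_2,\alpha_3$ at $10$;
\item $a'$ values $\gamma_1,\gamma_2,\gamma_3,\beta_1,\beta_2,\beta_3$ at $30,20,10,21,11,0$;
\item $z$ values $\gamma_1,\gamma_2,\gamma_3$ at $100,80,60$;
\item all other values are $0$.
\end{itemize}
The matching rounds return $X_a=\{\alpha_1,\alpha_2,\alpha_3\}$, $X_{a'}=\{\beta_1,\beta_2,\beta_3\}$, $X_z=\{\gamma_1,\gamma_2,\gamma_3\}$. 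The minimal subsidies are $(s_a,s_{a'},s_z)=(0,28,0)$, so $s\le V$ holds pointwise. Yet after averaging, $a$ pays $28/3>\tfrac23\cdot 10$, which exceeds the budget threshold of the lemma. Your LP fallback does not repair this. The sentence ``each payer compensates at most $n-1$ others by at most $V_a/n$'' restates the required conclusion; it gives no construction.

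The missing idea is that averaging needs a \emph{pairwise} bound in the payer's own scale: $s_{a'}-s_a\le V_a$ for all $a'$. Since $\bar s-s_a=\frac1n\sum_{a'\ne a}(s_{a'}-s_a)$, this bound immediately gives $-p_a\le\frac{n-1}{n}V_a$, regardless of the other agents' $V_{a'}$. Minimal subsidies can violate it (above, $s_{a'}-s_a=28>10$), so the subsidies of agents like $a$ must be raised.

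The paper does this by computing path weights in the envy graph of the modified valuations of \citet{oneDollar}: $\bar v_a(g^t_{a'})=\max\bigl(v_a(g^t_{a'}),v_a(g^{t+1}_a)\bigr)$ for $t<T$, with each agent's own goods left unchanged. This graph has three properties:
\begin{itemize}
\item it has no positive cycle;
\item it satisfies $\bar w\ge w$, so its subsidies $q$ still eliminate the original envy;
\item telescoping over rounds gives $\bar w(a,a')\ge v_a(g^T_{a'})-v_a(g^1_a)\ge -V_a$.
\end{itemize}
Combined with $q_a\ge \bar w(a,a')+q_{a'}$, the last property is exactly $q_{a'}-q_a\le V_a$. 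In the example above, $q=(18,28,0)$, and after averaging $a$ becomes a net recipient.
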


Let $\mathcal X=(X_a)_{a\in A}$ be an allocation. The \emph{envy graph}
$G_{\mathcal X}$ is the complete directed graph with vertex set $A$.
For every pair of agents $a,a'\in A$, the weight of the arc $(a,a')$ is
\[
	w_{\mathcal X}(a,a') := v_a(X_{a'})-v_a(X_a).
\]
A directed cycle $C=(a_1,a_2,\dots,a_r,a_1)$ in $G_{\mathcal X}$ is called
a \emph{positive envy cycle} if
\[
	w_{\mathcal X}(C)
	:=
	\sum_{i=1}^{r-1} w_{\mathcal X}(a_i,a_{i+1})
	+
	w_{\mathcal X}(a_r,a_1)
	>0.
\]
Equivalently,
\[
	\sum_{i=1}^{r-1}
	\bigl(v_{a_i}(X_{a_{i+1}})-v_{a_i}(X_{a_i})\bigr)
	+
	\bigl(v_{a_r}(X_{a_1})-v_{a_r}(X_{a_r})\bigr)
	>0.
\]

We use the following characterization of envy-freeable allocations due
to~\citet{DBLP:conf/sagt/HalpernS19}. In addition to conditions~(a)
and~(b) stated in \cref{thm:halpern}, they show that the following
condition is equivalent:
\begin{itemize}
    \item[\textit{(c)}] \textit{The envy graph of $\allocation$ contains no positive
    envy cycle.}
\end{itemize}
Moreover, when the envy graph contains no positive envy cycle,
envy-eliminating payments can be obtained as follows.
For every agent $a$, let $q_a$ be the maximum weight of a simple directed path starting at $a$ in the envy graph. 

Then $q_a\geq 0$, and for every pair of agents $a,a'\in A$, $q_a \geq w_{\mathcal X}(a,a')+q_{a'}$ \cite{DBLP:conf/sagt/HalpernS19}. Interpreting $q_a$ as the subsidy received by agent $a$, the payment vector $\mathbf q=(q_a)_{a\in A}$ eliminates envy.

We now recall the allocation procedure of \citet{oneDollar}. Without loss of generality, assume that the number of items is divisible by the number of agents. Indeed, we may add dummy goods of value zero to every agent until the total number of goods is divisible by $n=|A|$. These dummy goods can be discarded from the final allocation, since doing so does not affect any envy weights or $V_a$.

\Cref{alg:bounded-subsidy} then proceeds in rounds: in each round, it computes a maximum-weight \emph{perfect} matching between the agents and the remaining goods, allocates the matched goods, and removes them from further consideration. The procedure terminates once all goods have been allocated.

\begin{algorithm}[t] \caption{\textsc{Bounded-Subsidy Allocation}} \label{alg:bounded-subsidy} \begin{algorithmic}[1] \REQUIRE Agents $A$, goods $G$, additive valuations $(v_a)_{a\in A}$ \ENSURE An allocation $\mathcal X=(X_a)_{a\in A}$ \STATE $X_a \gets \emptyset$ for every $a\in A$ \STATE $t \gets 1$; $G^1 \gets G$ \WHILE{$G^t \neq \emptyset$} \STATE Compute a maximum-weight matching $M^t=\{(a,g_a^t)\mid a\in A\}$ in the complete bipartite graph between $A$ and $G^t$, where edge $(a,g)$ has weight $v_a(g)$. \STATE $X_a \gets X_a\cup\{g_a^t\}$ for every $a\in A$ \STATE $G^{t+1}\gets G^t\setminus \{g_a^t\mid a\in A\}$ \STATE $t\gets t+1$ \ENDWHILE \STATE \textbf{return} $\mathcal X=(X_a)_{a\in A}$ \end{algorithmic} \end{algorithm} The following lemma summarizes the part of the analysis by \citet{oneDollar} that we use in this work.

\begin{lemma}\label{lem:bounded-subsidy-property} Let $\mathcal X=(X_a)_{a\in A}$ be the allocation returned by \Cref{alg:bounded-subsidy}, and let $g_a^t$ denote the good assigned to agent $a$ in round $t$, where $T$ denotes the total number of rounds. For every agent $a\in A$, let $V_a := \max_{g\in G}v_a(g)$. There exists a modified valuation profile $\bar v$ such that the following properties hold: \begin{enumerate} \item the envy graph of $\mathcal X$ with respect to $\bar v$ contains no positive envy cycle; \item if $\bar w(a,a')$ denotes the envy of agent $a$ towards agent $a'$ under $\bar v$ and allocation $\mathcal X$, then for all $a,a'\in A$, \begin{align*} \bar w(a,a') &\geq v_a(g_{a'}^T)-v_a(g_a^1) \geq -V_a \ ; \end{align*}\item for all $a,a'\in A$ we have $\bar w(a,a')\geq w_{\mathcal X}(a,a')$.\end{enumerate} Moreover, $\mathcal X$ and the modified profile $\bar v$ can be constructed in polynomial time. \end{lemma}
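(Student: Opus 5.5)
The plan is to take $\bar v$ to be a per-agent, per-round upward modification of $v$ and verify the three properties directly. I will not reprove the construction of \citet{oneDollar} verbatim. The modification is chosen to make the telescoping bound in property~2 immediate, and the maximality of each round's matching will then supply property~1.

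\textbf{Construction.} Fix an agent $a$ and a round $t$. For her own goods, and for every good of the last round $T$, set $\bar v_a(g) := v_a(g)$. For a good $g=g_{a'}^t$ with $a'\neq a$ and $t<T$, set
\[
\bar v_a(g_{a'}^t) := \max\bigl\{v_a(g_{a'}^t),\, v_a(g_a^{t+1})\bigr\}.
\]
Extend $\bar v_a$ additively. This is clearly computable in polynomial time once \Cref{alg:bounded-subsidy} has run, and each run consists of $m/n$ maximum-weight matchings.

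\textbf{Properties 3 and 2.} Property~3 is immediate. We have $\bar v_a(X_a)=v_a(X_a)$ and $\bar v_a(X_{a'})\ge v_a(X_{a'})$, so $\bar w(a,a')\ge w_{\mathcal X}(a,a')$.

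For property~2, write
\[
\bar w(a,a')=\sum_{t=1}^{T}\bar v_a(g_{a'}^t)-\sum_{t=1}^{T}v_a(g_a^t).
\]
Use $\bar v_a(g_{a'}^t)\ge v_a(g_a^{t+1})$ for every $t<T$, and keep the term $v_a(g_{a'}^T)$ as it is. The sum then telescopes to at least $v_a(g_{a'}^T)-v_a(g_a^1)$. This is at least $-V_a$ because valuations are non-negative and $v_a(g_a^1)\le V_a$. (For $a=a'$ the envy is $0$, so the bound is trivial.)

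\textbf{Property 1 (main step).} I will show that a stronger statement holds round by round. For each round $t$, the matching $M^t$ maximizes $\sum_a \bar v_a(g_{\sigma(a)}^t)$ over all permutations $\sigma$ of the round-$t$ goods. Since envy-cycle weights decompose over rounds, this gives $\sum_{\text{cycle}}\bar w\le 0$ for every cycle: summing $\bar v_a(X_{\sigma(a)})$ over a cycle equals summing round-wise reassignments.

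Fix $t<T$ (round $T$ is unmodified, so there $M^T$ is optimal by construction) and a permutation $\sigma$. Let $S$ be the set of agents $a$ with $\sigma(a)\neq a$ and $v_a(g_a^{t+1})>v_a(g_{\sigma(a)}^t)$. Build an alternative assignment of goods from $G^t$ as follows:
\begin{itemize}
\item every $a\in S$ receives $g_a^{t+1}$;
\item every $a\notin S$ receives $g_{\sigma(a)}^t$.
\end{itemize}
The round-$(t+1)$ goods are pairwise distinct, distinct from all round-$t$ goods, and still available in $G^t$. So this is a valid matching in round~$t$, although not necessarily a perfect one onto the round-$t$ goods. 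If the algorithm's matching must be perfect on agents, it remains a matching of all agents to distinct goods of $G^t$, and that suffices.

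By maximality of $M^t$ among such matchings, its value is at most $\sum_a v_a(g_a^t)$. Its value is exactly $\sum_a \bar v_a(g_{\sigma(a)}^t)$, because $\bar v_a$ equals $v_a$ on agents outside $S$ and on fixed points of $\sigma$. This proves the per-round optimality.

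The one step I expect to need care is the disjointness bookkeeping: that no good is used twice in the alternative matching, and that the maximum-weight matching in \Cref{alg:bounded-subsidy} is taken over all of $G^t$, including goods later allocated in round $t+1$. Both follow from the round structure of the algorithm.
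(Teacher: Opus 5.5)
Your proposal is correct and uses the same modified profile $\bar v$ as the paper, with the same telescoping argument for property~2 and the same monotonicity argument for property~3. The one difference is property~1: you prove it directly by a round-wise exchange, replacing $g^t_{\sigma(a)}$ with the still-available good $g_a^{t+1}$ whenever the latter attains the maximum in $\bar v_a$, whereas the paper cites Claim~4.2 of \citet{oneDollar} together with the equivalence (a)--(c), so your version is self-contained.

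One small correction concerns the case $a=a'$. There the middle inequality $0\ge v_a(g_a^T)-v_a(g_a^1)$ is not automatic. It follows from one more exchange: for $T\ge 2$ the good $g_a^T$ is unmatched in $M^1$, and swapping it in for $g_a^1$ cannot increase the weight of $M^1$.
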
 \begin{proof} The allocation $\mathcal X$ is computed by the repeated maximum-weight matching procedure of \citet{oneDollar}. Their analysis constructs a modified valuation profile $\bar v$, defined as $\bar v_a(g_{a'}^t)=\max(v_a(g_{a'}^t),v_a(g_a^{t+1}))$ for $t<T$, and $\bar v_a(g_{a'}^T)=v_a(g_{a'}^T)$ for every pair of agents $a,a'\in A$. By Claim 4.2. \citep{oneDollar} the allocation $\mathcal{X}$ remains envy-freeable under these valuations, and hence, by the equivalence of conditions~(a)--(c) in
\cref{thm:halpern}, the corresponding envy graph contains no positive
envy cycle. Claim 4.3 \citep{oneDollar} states that the subsidy to each agent is at most $1$ for the modified valuation profile $\bar v$.
As they consider normalized valuations, in our case this implies that, for every pair of agents $a,a'\in A$, \begin{align*} \bar w(a,a') &\geq \sum_{t=1}^{T-1} \bigl(v_a(g_a^{t+1})-v_a(g_a^t)\bigr) + (v_a(g_{a'}^T)-v_a(g_a^T))\\ &= v_a(g_{a'}^T)-v_a(g_a^1) \geq -v_a(g_a^1) \geq -V_a. \end{align*} Finally, the construction of $\bar v$ leaves every agent's value for her own assigned goods unchanged and can only increase her value for goods assigned to other agents. Hence the modified envy of any agent towards any other agent is at least the original envy. This gives $\bar w(a,a')\geq w_{\mathcal X}(a,a')$ for all $a,a'\in A$. The allocation and the modified valuation profile can be constructed in polynomial time. \end{proof}

We can now prove \cref{subsidy}. \begin{proof}[Proof of \cref{subsidy}] For every agent $a\in A$, let $V_a := \max_{g\in G} v_a(g)$. Apply \Cref{alg:bounded-subsidy}, and let $\mathcal X=(X_a)_{a\in A}$ be the resulting allocation. Let $\bar v$ be the modified valuation profile from \cref{lem:bounded-subsidy-property}, and let $\bar w(a,a')$ be the envy of agent $a$ towards agent $a'$ under $\bar v$ and allocation $\mathcal X$. By \cref{lem:bounded-subsidy-property}.1, the envy graph of $\mathcal X$ under $\bar v$ contains no positive envy cycle. For every agent $a\in A$, let $q_a$ be the maximum weight of a directed path starting at $a$ in this envy graph. Since the empty path has weight zero, we have $q_a\geq 0$. Moreover, for every pair of agents $a,a'\in A$, $q_a \geq \bar w(a,a')+q_{a'}$.

Therefore, $q_{a'}-q_a \leq -\bar w(a,a')$. By \cref{lem:bounded-subsidy-property}.2, $\bar w(a,a')\geq -V_a$. Hence $q_{a'}-q_a \leq V_a$ for all $a,a'\in A$. 

By \cref{lem:bounded-subsidy-property}.3, we have $\bar w(a,a')\geq w_{\mathcal X}(a,a')$ for all $a,a'\in A$. Equivalently, \[ v_a(X_{a'})-v_a(X_a) \leq \bar v_a(X_{a'})-\bar v_a(X_a). \] Interpreting $q_a$ as the subsidy received by agent $a$, a payment vector $\mathbf q=(q_a)_{a\in A}$ that eliminates envy under $\bar v$ also eliminates envy under the original valuation profile $v$. Therefore, $\mathbf q$ eliminates envy for the original instance as well. It remains to transform the non-negative subsidies $\mathbf q$ into budget-balanced payments $\mathbf p$. Let $\bar q := \frac{1}{n}\sum_{a\in A} q_a$ and define $p_a := q_a-\bar q $ for every agent $a\in A$. Then the payments are budget-balanced: $\sum_{a\in A}p_a = \sum_{a\in A}q_a-n\bar q = 0$. Furthermore, envy-freeness is preserved because all payment differences remain unchanged. Indeed, for all agents $a,a'\in A$, \[p_a-p_{a'} = (q_a-\bar q)-(q_{a'}-\bar q) = q_a-q_{a'}.\] Thus $(\mathcal X,\mathbf p)$ is envy-free under the original valuation profile. Finally, we verify the individual budget constraints. For every agent $a\in A$,
\[
-p_a = \bar q-q_a = \frac{1}{n}\sum_{a'\in A}(q_{a'}-q_a)=
\frac{1}{n}\sum_{a'\in A\setminus\{a\}}(q_{a'}-q_a),
\]
where the last equality holds because the summand for $a'=a$ is $0$.
Using $q_{a'}-q_a\leq V_a$ for every $a'\in A$, we obtain
\[
-p_a \leq \frac{1}{n}\sum_{a'\in A\setminus\{a\}}V_a = \frac{n-1}{n}\,V_a.
\]
By assumption, $b_a\geq (1-\frac{1}{n})V_a$. Hence $-p_a\leq b_a$ for every agent $a\in A$.
Therefore the payment vector $\mathbf p$ is budget-feasible: it is budget-balanced and no agent pays more than her budget. All steps can be carried out in polynomial time. The allocation is computed by polynomially many maximum-weight matchings. The subsidy payments $\mathbf q$ are obtained from maximum path weights in an envy graph with no positive envy cycle \citep{DBLP:conf/sagt/HalpernS19}, which can be computed in polynomial time. The final payments $\mathbf p$ are obtained as described above. Hence, under the assumption $b_a\geq (1-\frac1n)V_a$ for every agent $a \in A$, the desired envy-free allocation with budget-feasible payments can be computed in polynomial time.
\end{proof}

Lemma~\ref{subsidy} identifies a sufficient budget threshold under which envy-freeness is always achievable and efficiently computable. The following theorem shows that this threshold is essentially tight.

\begin{theorem}\label{thm:np}
    \localBudgetAbrv\ is NP-complete, even when all agents have budget $b_a = \max_{g \in G} v_a(g) - 1$.
\end{theorem}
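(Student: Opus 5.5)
The plan is to reduce from classical envy-free allocation with binary additive valuations. The key observation is that binary valuations make the budget in the statement equal to zero: if every agent approves at least one good, then $\max_{g\in G} v_a(g) - 1 = 0$.

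\emph{Membership in NP.} A certificate is an allocation $\allocation$. By \cref{thm:lp}, we can check in polynomial time whether some payment vector makes $(\allocation,\mathbf p)$ budget-feasible and envy-free. So \localBudgetAbrv{} is in NP.

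\emph{Zero budgets force zero payments.} Suppose $b_a=0$ for all $a$. Feasibility gives $p_a\ge 0$ for every $a$, and budget balance gives $\sum_a p_a=0$. Together these force $\mathbf p=\mathbf 0$. Hence a budget-feasible envy-free allocation with payments exists if and only if a classical envy-free allocation exists.

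\emph{Source problem.} Deciding whether a classical envy-free (complete) allocation exists is NP-complete already for binary additive valuations, as shown by Aziz, Gaspers, Mackenzie and Walsh. I reduce from this problem. Given such an instance, I set every budget to $b_a=\max_g v_a(g)-1=0$. By the previous step the answers of the two instances coincide.

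\emph{Technical obstacle: agents approving nothing.} If some agent $a$ approves no good, then $\max_g v_a(g)-1=-1$, which is not a legal budget. I fix this by adding, for each such agent $a$, a fresh good $g^*_a$ that only $a$ approves, and giving $a$ budget $0$.
\begin{itemize}
\item[($\Rightarrow$)] Take an envy-free allocation of the original instance and give each $g^*_a$ to $a$. This stays envy-free: all other agents value $g^*_a$ at $0$, and $a$ now values her own bundle at $1$, which is at least her value for any other bundle.
\item[($\Leftarrow$)] In any envy-free allocation of the new instance, $g^*_a$ must be held by $a$. Otherwise $a$ has value $0$ for her own bundle and value $1$ for the holder's bundle, so she envies the holder. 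Deleting all $g^*_a$ therefore returns an envy-free allocation of the original instance. No one else's values change, and $a$ valued the original goods at $0$ anyway.
\end{itemize}
After this step every agent has maximum value $1$, so the budgets are exactly $\max_g v_a(g)-1$, and the reduction runs in polynomial time.

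The main point requiring care is this last normalization together with correctly citing the binary-valuation hardness. If a self-contained proof is preferred, I would instead reduce directly from a matching-type problem such as Exact Cover by 3-Sets. In that version, element agents approve their set-goods, and "big" agents force each element agent to receive exactly one approved good.
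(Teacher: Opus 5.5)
Your proof is correct for the statement as literally written, but it takes a different and much more degenerate route than the paper. You use the fact that for binary valuations the prescribed budget $\max_g v_a(g)-1$ collapses to $0$. Budget balance then forces $\mathbf p=\mathbf 0$, and the theorem reduces to the zero-budget hardness already contained in \cref{thm:npw}. Your fresh goods $g^*_a$ are exactly what is needed to make the budget a legal natural number for agents who approve nothing, and both directions of that normalization are sound. Using \cref{thm:lp} for NP membership is, if anything, more careful than the paper's remark.

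The paper instead reduces from \textsc{Unary Exact Bin Packing} to instances with $\max_g v_a(g)=B$ and budgets $B-1$, for arbitrarily large $B\ge 2$. There every budget is positive and transfers are unavoidable: the ($\Leftarrow$) direction rests on tight accounting that forces $a^\star$ and every type-$b$ agent to pay exactly $B-1$. Budgets are also a $(1-\frac1B)$-fraction of the maximum value. This is what supports the abstract's claim of hardness ``even when agents have relatively large budgets''. It also makes the comparison with the $(1-\frac1n)\max_g v_a(g)$ threshold of \cref{subsidy} informative beyond $0/1$ values.

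Your argument alone leaves open whether positive budgets, or budgets that are a constant fraction of the maximum value, already restore tractability. So your route gains brevity and reuses a standard hardness result. The paper's route gains hardness in the regime where payments genuinely matter.
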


\begin{proof}
We give the full reduction from \textsc{Unary Exact Bin Packing} \citep{DBLP:journals/corr/abs-2009-11840}.
Let $(s_1,\dots,s_m;B,k)$ be an instance with
$\sum_{i=1}^m s_i=kB$.
We may assume that $B\ge 2$ and $\max_i s_i=B$: otherwise, we add one
item of size $B$ and one additional bin.

We construct an instance
$I'=(A,G,v,\mathbf b)$ of \localBudgetAbrv{} as follows.
There are $(k+1)B-k$ type-$a$ agents
$a_1,\ldots,a_{(k+1)B-k}$ and $k$ type-$b$ agents
$b_1,\ldots,b_k$.
For every item $s_i$, we introduce a good $g_i$ with
$v_{a_r}(g_i)=0$
and
$v_{b_j}(g_i)=s_i$
for all relevant $r,j$. We also introduce a special good $g^\star$
with
$v_{a_r}(g^\star)=B$
and
$v_{b_j}(g^\star)=0$.
Finally, every agent has budget $B-1$.

Since $\max_i s_i=B$, every agent's maximum single-good value is $B$,
and hence $b_a=\max_{g\in G}v_a(g)-1$ for every agent $a$. Moreover, the construction is polynomial because the source instance is encoded in unary.

\noindent\textbf{Correctness.}
We show that $I$ is a \texttt{yes}-instance of
\textsc{Unary Exact Bin Packing} if and only if $I'$ admits a budget-feasible,
envy-free allocation with balanced payments.

\smallskip
\noindent($\Rightarrow$)
Assume $S$ admits a packing into $k$ bins of capacity $B$, i.e., there is a
partition $S_1,\dots,S_k$ of $S$ such that
$\sum_{s\in S_j}s=B$ for all $j\in[k]$.

Define an allocation $\mathcal{X}$ by giving each $b_j$ exactly the goods
corresponding to $S_j$,
$X_{b_j}=\{g_i : s_i\in S_j\}$
for all $j\in[k]$,
giving one arbitrary type-$a$ agent, say $a_1$, the special good, $X_{a_1}=\{g^\star\}$,
and setting $X_{a_r}=\emptyset$ for all other type-$a$ agents $a_r\neq a_1$.
Define payments $\mathbf{p}$ by
\begin{align*}
&p_{b_j}=-B+1 \quad\text{for all } j\in[k],\\
&p_{a_1}=-B+1,\\
&p_{a_r}=1 \quad\text{for all } a_r\neq a_1.
\end{align*}
A type-$a$ agent evaluates the special-good recipient at $B-(B-1)=1$, every other type-$a$ agent at $1$, and every type-$b$ bundle at $0-(B-1)\le 1$.
A type-$b$ agent evaluates every type-$b$ bundle at $B-(B-1)=1$, every (nonspecial) type-$a$ agent at $0+1=1$, and the special-good recipient at $0-(B-1)\le 1$. Hence the outcome
is envy-free. Moreover, every paying agent pays exactly $B-1$, so $-p_a\le B-1=b_a$ for every agent $a$. Finally, since there are
$(k+1)(B-1)$
type-$a$ agents other than $a_1$, we have
\[\sum_{a\in A}p_a
=(k+1)(B-1)-(B-1)-k(B-1)=0.\]
Hence, the payments are budget-balanced. It follows that the allocation with payments is budget-feasible.

\smallskip
\noindent($\Leftarrow$)
Assume there exists a budget-feasible, envy-free outcome
$(\mathcal{X},\mathbf{p})$.

First, we claim that the special good $g^\star$ must be allocated to a type-$a$ agent.
Suppose instead that $g^\star$ is allocated to a type-$b$ agent $b_j$. Then
every type-$a$ agent values $X_{b_j}$ at $B$. Since $b_j$ can pay at most
$B-1$, envy-freeness implies that every type-$a$ agent must receive payment at
least $1$. Thus the total positive payment to type-$a$ agents is at least
$(k+1)B-k$.
However, the only remaining agents are the $k$ type-$b$ agents, and together
they can pay at most $k(B-1)$, which is strictly smaller. This contradicts
balancedness of the payments. Hence $g^\star$ is allocated to some type-$a$ agent; call this
agent $a^\star$.

Since $a^\star$ values $g^\star$ at $B$ and can pay at most $B-1$, we have
$v_{a^\star}(X_{a^\star})+p_{a^\star}\ge B-(B-1)=1$.
Every type-$a$ agent $a_r\neq a^\star$ values her own bundle at $0$, but values $X_{a^\star}$ at $B$. Hence, by envy-freeness, every such agent must receive payment at least $1$. There are
\[
(k+1)B-k-1=(k+1)(B-1)
\]
such agents. Thus their total payment is at least $(k+1)(B-1)$.
The remaining agents are $a^\star$ and the $k$ type-$b$ agents. Each of them can
pay at most $B-1$, so together they can pay at most $(k+1)(B-1)$. Since payments
are balanced, all these bounds must be tight. In particular,
$p_{b_j}=-(B-1)$
 for all $j\in[k]$.

We next show that all goods $g_i$ are allocated to type-$b$ agents. Suppose, for
contradiction, that some $g_i$ is allocated to a type-$a$ agent. Then the total
value of the goods allocated to type-$b$ agents is strictly less than $kB$, so
some type-$b$ agent $b_j$ receives a bundle of value strictly less than $B$.
Since $p_{b_j}=-(B-1)$, we get
\[
v_{b_j}(X_{b_j})+p_{b_j}<B-(B-1)=1.
\]
But every type-$a$ agent other than $a^\star$ receives payment at least $1$, so
$b_j$ would envy such an agent, a contradiction. Hence all goods $g_i$ are
allocated among the type-$b$ agents.

For each $j\in[k]$, define 
$S_j := \{\, s_i \in S : g_i \in X_{b_j}\,\}$. 
Then $(S_1,\dots,S_k)$ is a partition of $S$. It remains to show that each part
has total size $B$.

Suppose not. Then there are $j\neq j'$ such that
$\sum_{s\in S_j}s > \sum_{s\in S_{j'}}s$.
Since type-$b$ agents have identical valuations over the goods $g_i$, we have $v_{b_{j'}}(X_{b_j}) > v_{b_{j'}}(X_{b_{j'}})$.
Moreover, $p_{b_j}=p_{b_{j'}}=-(B-1)$. Hence
\[
v_{b_{j'}}(X_{b_j})+p_{b_j}
>
v_{b_{j'}}(X_{b_{j'}})+p_{b_{j'}},
\]
so $b_{j'}$ envies $b_j$, contradicting envy-freeness.

Therefore all sums $\sum_{s\in S_j}s$ are equal. Since their total is $kB$, each of them equals $B$. Thus $(S_1,\dots,S_k)$ is an exact bin packing of $S$.

Hence $S$ is a \texttt{yes}-instance of \textsc{Unary Exact Bin Packing} if and only if $I'$ has a budget-feasible, envy-free allocation with payments.

Membership in NP holds because a proposed allocation can be verified for envy-freeness in polynomial time. 
\end{proof}

We next show that hardness persists even under different strong restrictions on the valuation functions.

\begin{theorem}\label{thm:npw}
\localBudgetAbrv{} is NP-complete, even when all agents have binary
valuations and identical budgets. It is NP-complete already for $n=2$
agents with identical, binary-encoded additive valuations.
Furthermore, parameterized by the number $n$ of agents,
\localBudgetAbrv{} is $W[1]$-hard, even when all agents have identical
additive valuations encoded in unary.
\end{theorem}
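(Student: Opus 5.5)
Membership in NP is shown as for \cref{thm:np}: we guess an allocation and check with the LP of \cref{thm:lp}. All three hardness claims then use one idea. We set every budget to $0$. Budget balance forces $\sum_a p_a=0$ and feasibility forces $p_a\ge 0$, so $\mathbf p=\mathbf 0$. An instance of \localBudgetAbrv{} with all budgets zero is therefore exactly the classical question of whether an envy-free allocation of indivisible goods exists. Each part of the statement thus follows from a known hardness result for classical envy-freeness under the matching restriction on valuations, transferred by this trivial reduction.

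\emph{Binary valuations, identical budgets.} We use the known NP-hardness of deciding existence of an envy-free allocation under binary (0/1) additive valuations (Aziz et al.; Bouveret and Lang), and give every agent budget $0$. If a self-contained argument is preferred, we reduce from \textsc{Exact Cover by 3-Sets} or a similar matching-type problem, with goods for elements and agents approving sets. We expect this gadget to be the most delicate step. In particular, "dummy" agents who approve nothing must not be envious, and every good must be placed. To sidestep the delicate construction, we would rather cite the known result.

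\emph{Two agents, identical binary-encoded valuations.} We reduce from \textsc{Partition} with numbers $s_1,\dots,s_m$. We create goods $g_i$ with $v(g_i)=s_i$ for both agents and set both budgets to $0$. With identical valuations and zero payments, envy-freeness for two agents means $v(X_1)=v(X_2)$, which is exactly a partition. Binary encoding is needed because \textsc{Partition} is only weakly NP-hard.

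\emph{$W[1]$-hardness in $n$ with unary identical valuations.} We reduce from \textsc{Unary Bin Packing}, parameterized by the number $k$ of bins. This problem is $W[1]$-hard (Jansen et al.), and its exact version is $\sum s_i=kB$, which we may assume by padding with unit items. We take $n=k$ agents with identical valuations $v(g_i)=s_i$ and all budgets $0$. With zero payments and identical valuations, an allocation is envy-free iff all bundles have equal value, which here is $B$, i.e., an exact packing into $k$ bins. The reduction is parameter-preserving ($n=k$) and polynomial since sizes are unary. The only point to check is that the source problem's $W[1]$-hardness indeed holds for the exact-sum variant, which we would verify or obtain by the standard padding.
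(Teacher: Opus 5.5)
Your proposal is correct and follows essentially the same route as the paper. Setting all budgets to zero forces $\mathbf p=\mathbf 0$, which reduces the problem to classical envy-free allocation; hardness then comes from the known binary-valuation result, from \textsc{Partition} for $n=2$, and from \textsc{Unary Bin Packing} for $W[1]$-hardness. The differences are minor: you spell out the bin-packing reduction, with unit-item padding to obtain the exact-sum variant, where the paper simply cites \citet{BBN16}. You also justify NP membership via the LP of \cref{thm:lp}, which is slightly more careful than the paper's remark.
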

\begin{proof}
    Consider the case where all budgets are zero, i.e., $b_a = 0$ for all $a \in A$. In this setting, \localBudgetAbrv\  reduces to the problem of deciding whether an envy-free allocation exists. This problem is known to be NP-hard even when agents have binary valuations, as shown in Theorem~11 of \citet{AZIZ201571}. Membership in NP holds because a proposed allocation can be verified for envy-freeness in polynomial time.
    Thus, the problem is NP-complete already under binary valuations and identical budgets. 
For two agents with identical additive valuations, an allocation is
envy-free if and only if both agents receive the same value. Thus,
\textsc{Partition} yields NP-hardness already for $n=2$ when valuations
are encoded in binary. Finally, deciding the existence of an envy-free allocation is $W[1]$-hard with respect to the number of agents $n$, as follows from reductions from Unary Bin Packing~\cite{BBN16}. 
\end{proof}

\begin{remark}\label{rem:n-equals-m}
\localBudgetAbrv{} remains NP-complete when the numbers of agents and goods coincide, i.e., $n=m$ 
\end{remark}

\begin{proof}
We adapt the NP-hardness reduction for envy-free allocation with identical additive valuations and zero budgets. Let $I$ be an instance with $n$ agents and $m$ goods, where all agents have the same non-negative additive valuation $v$ and $v(G)=nB$. We may assume that $m\ge n$.

Construct an instance $I'$ by adding a new good $g^\star$, one new agent with the same valuation as the original agents, and $m-n$ dummy agents. Extend the valuation of every non-dummy agent by setting $v(g^\star)=B$, while every dummy agent values all goods at zero. All agents have budget zero. The resulting instance has $n+1+(m-n)=m+1$ agents and $m+1$ goods, and hence equally many agents and goods.

$(\Rightarrow)$ Suppose first that $I$ admits an envy-free allocation. Since the
original agents have identical valuations and the total value is
$nB$, every agent receives value $B$. Assigning $g^\star$ to the new
non-dummy agent and leaving all dummy agents empty therefore yields an
envy-free allocation of $I'$.

$(\Leftarrow)$ Conversely, suppose that $I'$ admits an envy-free allocation. Since all budgets are zero, all payments must be zero. The good $g^\star$ cannot be assigned to a dummy agent: otherwise, every non-dummy agent would need utility at least $B$ to avoid envying its bundle, although the
original goods have total value only $nB$ for the $n+1$ non-dummy
agents.

Thus, $g^\star$ is assigned to a non-dummy agent. The $n+1$ non-dummy
agents have identical valuations, so envy-freeness requires them to
receive the same value. Their total value for all goods is $v(G)+v(g^\star)=(n+1)B$, and hence each receives value exactly $B$. In particular, the
recipient of $g^\star$ receives no original goods of positive value. If the recipient of $g^\star$ is one of the original agents, we relabel the new non-dummy agent as that original agent. Since all non-dummy agents have identical valuations, this does not affect envy-freeness. 
Remove this agent and $g^\star$, and reassign any zero-valued goods in
her bundle arbitrarily among the remaining agents. The resulting
allocation of the original goods gives each of the $n$ original agents
value $B$ and is therefore envy-free.

Consequently, $I$ admits an envy-free allocation if and only if $I'$
does. The construction is polynomial and satisfies $n=m$, proving the
claim.
\end{proof}

\section{Fixed-parameter Algorithms}
\label{sec:algorithms}
\appendixsubsection{sec:algorithms}

While the previous section established the NP-completeness of \localBudgetAbrv{}, we now refine our understanding by analyzing the problem through the lens of parameterized complexity. We investigate tractable cases and structural insights that enable fixed-parameter algorithms under certain assumptions. We begin with the general case of heterogeneous valuations, then focus on the special case of identical valuations, and finally explore structured instances defined by agent and good \emph{types}.

\subsection{Differing valuations}

The presence of payments allows agents to receive positive utility even without being allocated any goods. As a result, allocations are no longer strictly bounded by a function of the number of available goods.

Nevertheless, we now present structural observations that remain valid even when valuations differ across agents. These will be useful for analyzing the parameterized complexity of \localBudgetAbrv{} with respect to the number of goods $m$.

\begin{lemma}\label{obs:m}
Let $(\allocation,\mathbf{p})$ be an envy-free, budget-feasible allocation, and let $X\subseteq G$ be a non-empty bundle allocated to an agent~$a$. Then at most $m-1$ agents value $X$ strictly higher than $a$ does.
\end{lemma}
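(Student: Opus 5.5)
The plan is a short counting argument: show that every agent who values $X$ strictly more than $a$ does must receive a non-empty bundle, and then count the goods. Budgets and budget balance will not be needed; only envy-freeness of $(\allocation,\mathbf{p})$ is used.

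First, fix an agent $b\neq a$ with $v_b(X) > v_a(X)$. Note that $b \neq a$ holds automatically, since $a$ cannot value $X$ strictly more than itself. Suppose, for contradiction, that $X_b=\emptyset$. Since $b$ does not envy $a$, we have
\[
p_b = v_b(X_b)+p_b \;\ge\; v_b(X)+p_a.
\]
Since $a$ does not envy $b$ and valuations are additive with $v_a(\emptyset)=0$, we have
\[
v_a(X)+p_a \;\ge\; v_a(\emptyset)+p_b = p_b.
\]
Chaining these two inequalities gives $v_b(X)+p_a \le p_b \le v_a(X)+p_a$, that is, $v_b(X)\le v_a(X)$. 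This contradicts the choice of $b$. Hence every such $b$ has $X_b\neq\emptyset$.

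Second, count goods. The bundles of the allocation are pairwise disjoint subsets of $G$, and $|G|=m$. The bundle $X_a=X$ is non-empty, so it uses at least one good. Therefore at most $m-1$ goods remain for the bundles of all other agents. Since each non-empty bundle contains at least one good, at most $m-1$ agents other than $a$ have non-empty bundles. By the first step, every agent valuing $X$ strictly higher than $a$ is among them, which gives the bound.

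The only real step is the first one: combining the two envy-freeness inequalities so that the payment terms $p_a$ and $p_b$ cancel. I do not expect any further obstacle.
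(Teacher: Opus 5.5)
Your proof is correct and is essentially the paper's argument. The paper assumes there are $m$ such agents, uses the fact that at most $m-1$ bundles other than $a$'s are non-empty to find one of them with an empty bundle, and derives the same contradiction $v_{a'}(X)+p_a\le p_{a'}\le v_a(X)+p_a$ from envy-freeness; you run the same two steps directly (non-emptiness first, then counting), and as in the paper, budgets play no role.
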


\begin{proof}

Suppose, for a contradiction, that there is a set $A'\subseteq A\setminus\{a\}$ of at least $m$ agents such that $v_{a'}(X)>v_a(X)$ for every $a'\in A'$.
Since an allocation contains at most $m$ non-empty bundles and $a$ receives one of them, some agent $a'\in A'$ must receive an empty bundle. Envy-freeness implies
$v_{a'}(X)+p_a\le p_{a'}\le v_a(X)+p_a$.
Since $v_{a'}(X)>v_a(X)$, these inequalities are incompatible, a contradiction. 
\end{proof}

\begin{lemma}\label{obs:b}
Let $(\allocation,\mathbf{p})$ be an envy-free, budget-feasible allocation in which agent~$a$ receives a bundle $X$ and agent~$a'$ receives an empty bundle. If $v_a(X)=v_{a'}(X)$ and $b_{a'}\ge b_a$, then swapping the bundles and payments of $a$ and $a'$ preserves envy-freeness and budget-feasibility.
\end{lemma}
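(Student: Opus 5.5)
The plan is to verify directly that the swap leaves every relevant quantity either unchanged or equal to something already controlled by envy-freeness. Set $v:=v_a(X)=v_{a'}(X)$ and define the swapped outcome $(\allocation',\mathbf p')$ by $X'_a=\emptyset$, $p'_a=p_{a'}$, $X'_{a'}=X$, $p'_{a'}=p_a$, with all other bundles and payments unchanged.

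\emph{Pin down the payments.} The first step is to use the two envy-freeness constraints between $a$ and $a'$ in the original outcome. Since $a'$ does not envy $a$, we get $p_{a'}\ge v+p_a$. Since $a$ does not envy $a'$, we get $v+p_a\ge p_{a'}$. Hence $p_{a'}=v+p_a$. Two facts follow. First, each agent's effective utility is preserved: $u'_a=p_{a'}=v+p_a=u_a$ and $u'_{a'}=v+p_a=p_{a'}=u_{a'}$. Second, $p_{a'}\ge p_a$, because $v\ge 0$.

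\emph{Budget-feasibility.} The multiset of payments is unchanged, so budget balance is preserved. Agent $a'$ now pays $p_a$, and $p_a\ge -b_a\ge -b_{a'}$ by the hypothesis $b_{a'}\ge b_a$. Agent $a$ now pays $p_{a'}$, and $p_{a'}\ge p_a\ge -b_a$ by the consequence above. This is the only step that genuinely needs the equality derived in the first step together with the budget assumption; I expect it to be the main (mild) obstacle.

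\emph{Envy-freeness.} The collection of (bundle, payment) pairs is the same as before; only the pairs held by $a$ and $a'$ are exchanged. Every third agent $c$ therefore sees exactly the same options and keeps its own utility, so $c$ still envies no one. Agents $a$ and $a'$ keep their utilities. Their valuations of any third agent's bundle-plus-payment are unchanged, so they envy no third agent. For the pair itself, $a$ values $a'$'s new holding at $v+p_a=u'_a$, and $a'$ values $a$'s new holding at $p_{a'}=u'_{a'}$. So neither envies the other, which completes the argument.
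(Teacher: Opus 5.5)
Your proposal is correct and follows essentially the same route as the paper. Both arguments first derive $p_{a'}=v_a(X)+p_a$ from the two envy constraints between $a$ and $a'$, then observe that the bundle--payment pairs are merely exchanged, so utilities and envy-freeness are preserved, and finally check budgets via $p_{a'}\ge p_a\ge -b_a\ge -b_{a'}$; your case-by-case envy check and your explicit use of $v_a(X)\ge 0$ just spell out what the paper leaves implicit.
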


\begin{proof}
By envy-freeness between $a$ and $a'$, we have $v_a(X)+p_a\ge p_{a'}$ and  $p_{a'}\ge v_{a'}(X)+p_a$. Since $v_a(X)=v_{a'}(X)$, both inequalities are tight, and thus $p_{a'}=v_a(X)+p_a$. Now give $X$ and payment $p_a$ to $a'$, and give the empty bundle and payment $p_{a'}$ to $a$. The bundle--payment pairs available for comparison are merely exchanged, and the equality above ensures that the utilities of $a$ and $a'$ remain unchanged. Hence envy-freeness is preserved.
For budget-feasibility, agent~$a$ receives payment $p_{a'}\ge p_a\ge -b_a$. Moreover, as $b_{a'}\ge b_a$, $p_a\ge-b_a\ge-b_{a'}$. Thus neither agent exceeds her budget, while budget balance remains.
\end{proof}

These observations bound the set of agents who need to be considered as recipients of any fixed bundle.

\begin{theorem}\label{thm:m} \localBudgetAbrv\ can be solved in $\mathrm{poly}(n,m) \cdot m^{2m}$ time, which is fixed-parameter tractable with respect to the number of goods $m$.\end{theorem}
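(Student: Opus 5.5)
Since there are at most $m$ non-empty bundles, the plan is to enumerate the partition of $G$ into non-empty bundles, then for each bundle enumerate a small candidate set of recipients, and finally decide feasibility of the payments with the LP of \cref{thm:lp}.

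\textbf{Enumerating bundles.} A partition of $G$ into at most $m$ non-empty, unlabeled bundles $Y_1,\dots,Y_k$ ($k\le m$) is guessed by assigning each good a label in $[m]$. This gives at most $m^m$ guesses. All remaining agents receive empty bundles.

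\textbf{Enumerating recipients.} This is the crucial step. For a fixed bundle $Y_j$, I order the agents by the key $(v_a(Y_j)$ decreasing, $b_a$ decreasing$)$, breaking ties by index, and keep the first $2m-1$ agents in this order as candidates $C_j$. I claim that some solution gives each $Y_j$ to an agent of $C_j$ whenever any solution exists.

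Suppose $Y_j$ is held by an agent $a\notin C_j$. By \cref{obs:m}, at most $m-1$ agents value $Y_j$ strictly more than $a$ does. So among the $2m-1$ candidates, at least $m$ have $v_{a'}(Y_j)=v_a(Y_j)$ and budget at least $b_a$, because they precede $a$ in the order. At most $m-1$ of these hold another non-empty bundle, so one of them, say $a'$, has an empty bundle. \Cref{obs:b} then lets us swap $a$ and $a'$ while keeping envy-freeness and budget-feasibility.

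The swap needs care on two points. First, a swap for one bundle must not break what has already been fixed for another bundle. I would handle this with a potential argument: each swap strictly moves some bundle earlier in its candidate order and leaves the other non-empty holders unchanged, since $a'$ held the empty set. So the process terminates. Second, the swap may move $a$, now empty, into some $C_{j'}$, but this is harmless.

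With candidate sets fixed, I enumerate injective assignments of bundles to candidates, at most $(2m-1)^m$ of them. Combined with the partition guesses, this gives roughly $m^m(2m)^m=\mathcal O(m^{2m}2^m)$ options. To match the stated $m^{2m}$ bound I would tighten the argument by using candidate sets of size $m$ together with ordered set partitions. Alternatively, the extra $2^m$ factor can be absorbed, since it is dominated up to polynomial terms in the exponent; I would adjust the counting as needed.

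\textbf{Checking payments.} For each full allocation, I run the LP of \cref{thm:lp} in $\mathrm{poly}(n,m)$ time and accept if any LP is feasible.

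\textbf{Main obstacle.} The hard part is the exchange argument justifying the candidate sets: ensuring that the repeated swaps from \cref{obs:b} are simultaneously valid for all bundles and terminate. A secondary difficulty is getting the counting to exactly $m^{2m}$.
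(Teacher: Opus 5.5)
Your approach is the paper's: guess the partition into at most $m$ non-empty bundles, restrict each bundle's recipient to a list of agents sorted by value and then budget, justify the restriction with \cref{obs:m} and \cref{obs:b}, and test each resulting allocation with the LP of \cref{thm:lp}. Your exchange argument with $2m-1$ candidates is correct. A swap never changes the holder of any other non-empty bundle, so at most $m$ swaps are needed. This establishes fixed-parameter tractability.

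\textbf{The gap: the running time is not established.} The theorem claims $\mathrm{poly}(n,m)\cdot m^{2m}$, but your count is $m^m\cdot(2m-1)^m\approx 2^m m^{2m}$. The factor $2^m$ is not polynomial in $m$, so it cannot be absorbed into $\mathrm{poly}(n,m)$; your remark that it is ``dominated up to polynomial terms in the exponent'' is false. Shrinking the lists to $m$ agents also does not follow from what you wrote. \cref{obs:m} alone only guarantees that one of the first $m$ agents ties with $a$ on $v(X)$, and that agent may hold another non-empty bundle.

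\textbf{The missing ingredient} is the inequality inside the proof of \cref{obs:m}. If $a$ holds $X$ and $a'$ holds the empty bundle, envy-freeness gives $v_{a'}(X)+p_a\le p_{a'}\le v_a(X)+p_a$, hence $v_{a'}(X)\le v_a(X)$. Let $C_X$ be the first $m$ agents in the value-then-budget order, with $a\notin C_X$. Every agent of $C_X$ holding the empty bundle must then tie with $a$ in value, and by the tie-breaking rule it satisfies $b_{a'}\ge b_a$. Pigeonhole guarantees such an agent exists, since there are at most $m-1$ other non-empty bundles. \cref{obs:b} then performs the swap. With $|C_X|\le m$ you enumerate at most $m^m\cdot m^m$ allocations, which is the stated bound. (Alternatively, you could keep $2m-1$ candidates and bound the number of unlabeled set partitions by something much smaller than $m^m$. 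That requires a Bell-number estimate you did not supply.)
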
 

\begin{proof} 
We may assume that $n \geq m$, since otherwise all agents can
be retained as candidates. For every non-empty bundle $X\subseteq G$, let $C_X$ consist of the first $m$ agents ordered by non-increasing value $v_a(X)$, breaking ties by non-increasing budget. We first show that, whenever a feasible solution exists, there is one in which every non-empty bundle $X$ is assigned to an agent in $C_X$. Consider a bundle $X$ assigned to some agent $a\notin C_X$. By \cref{obs:m}, fewer than $m$ agents value $X$ strictly more than $a$ does. Hence some agent $a'\in C_X$ satisfies $v_{a'}(X)=v_a(X)$ and $b_{a'}\ge b_a$. Moreover, since $C_X$ contains $m$ agents whereas there are at most $m-1$ other non-empty bundles, one such agent $a'\in C_X$ receives the empty bundle. By \cref{obs:b}, we may swap $a$ and $a'$ while preserving envy-freeness and budget-feasibility. Thus $X$ can be assigned to one of its candidates. The algorithm first enumerates all partitions of the $m$ goods into at most $m$ labeled bundles. There are at most $m^m$ such partitions. For every non-empty bundle $X$, it then chooses a distinct recipient from $C_X$. Since $|C_X|\le m$, there are at most another $m^m$ choices. For each resulting allocation, we use the linear program from \cref{thm:lp} to test whether suitable budget-feasible, envy-eliminating payments exist. Each LP is solvable in polynomial time, and the total number of allocations considered is at most $m^{2m}$. The overall running time is therefore $\mathrm{poly}(n,m)\cdot m^{2m}$, which is fixed-parameter tractable with respect to~$m$.
\end{proof}

As the total number of goods $m$ can be extremely large, especially when goods of the same type are duplicated and good valuations can be high, we focus on a more refined parameterization. Specifically, we consider the number of agents $n$ and the number of distinct good types $\lambda$. For this setting, we establish FPT using a mixed integer linear programming (MILP) formulation.

\begin{proposition}[{\citet{lenstra1983integer,kannan1987minkowski,frank1987application}}]
\label{prop:lenstra}
There is an algorithm for solving
MILP with $p$ integer variables
in time $2^{O(p \log p)} \cdot \mathrm{poly}(L)$ and space polynomial in $L$,
where $L$ is the bit-size of the input.
\end{proposition}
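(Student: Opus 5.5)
Since this proposition is a classical result that the paper quotes from the literature, the plan is to reconstruct the standard Lenstra--Kannan argument rather than derive anything new from the paper. Concretely, the goal is a recursive branching algorithm over the $p$ integer variables. Each recursion level will branch only $p^{O(1)}$ times, so the recursion tree has $p^{O(p)}=2^{O(p\log p)}$ nodes and each node does only polynomial work in $L$.

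\textbf{Reduction to feasibility.} First, I would reduce optimization to feasibility. The MILP with integer variables $x\in\Z^p$ and continuous variables $y\in\R^q$ has an optimal value of polynomial bit-size, if it is finite. So I would add the constraint $c^\top(x,y)\ge \theta$ and binary search on $\theta$, which costs $\mathrm{poly}(L)$ feasibility calls.

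\textbf{Projection and rounding.} For feasibility, let $P=\{(x,y): Ax+By\le d\}$ and let $Q=\{x:\exists y,\ (x,y)\in P\}$ be its projection. I would never write $Q$ down explicitly, since Fourier--Motzkin elimination may blow up. Instead I would access $Q$ only through linear programming: membership of a rational point in $Q$, and linear optimization over $Q$, are both single LPs over $P$, solvable in $\mathrm{poly}(L)$. With these oracles, the shallow-cut ellipsoid method yields an affine map $\tau$ and a ball $B$ with $B\subseteq \tau(Q)\subseteq p^{O(1)}B$. The same step also detects when $Q$ is empty or lower-dimensional; in the lower-dimensional case we restrict to its affine hull and lose a dimension.

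\textbf{Lattice step.} Next I would apply lattice basis reduction to the lattice $\tau(\Z^p)$; Kannan's HKZ-type reduction runs in $p^{O(p)}\mathrm{poly}(L)$ time. This gives a dichotomy. In the first case, a lattice point close to the centre lies in $\tau(Q)$, giving an integer $x$ for which the LP in $y$ is feasible, so we stop with a solution. In the second case, we obtain a primitive integer direction $w$ along which the width of $Q$ is small. With Kannan's refinement, the number of integer values of $w^\top x$ that intersect $Q$, summed appropriately over the recursion, is bounded so that the whole tree has $p^{O(p)}$ leaves. For each such value $k$ we recurse on the MILP with the extra constraint $w^\top x=k$. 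After a unimodular change of coordinates this MILP has $p-1$ integer variables and the same continuous variables.

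\textbf{Space bound and main obstacle.} Polynomial space follows by exploring the recursion tree depth-first. The depth is at most $p$, and each frame stores only a reduced basis and the constraint data, which have polynomially bounded bit-size. The main obstacle is the flatness bound itself: proving that the branching factor is $p^{O(1)}$ per level rather than exponential, and that bit-sizes stay controlled across the coordinate changes. For both points I would rely on Kannan's analysis of HKZ-reduced bases, which gives the width bound, together with the polynomial bit-size guarantees of ellipsoid rounding applied to projections via LP oracles, following Frank and Tardos.
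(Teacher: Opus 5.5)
The paper gives no proof of this proposition. It is imported as a black box from Lenstra, Kannan, and Frank--Tardos, and your outline is a faithful reconstruction of that cited argument: LP-based access to the projection $Q$, ellipsoidal rounding, lattice-based flatness branching with Kannan's $p^{O(p)}$ bound, and depth-first search with Frank--Tardos size control for polynomial space. You defer the same hard quantitative steps to the same sources.

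If you write it out, tighten two points. First, intersect with a box of polynomial bit-size before rounding, justified by the existence of small solutions, since otherwise $Q$ may be unbounded. Second, Kannan gets $p^{O(p)}$ by fixing several coordinates at once with $(2p)^{O(i)}$ branches. A $p^{O(1)}$ width bound for a single direction per level instead comes from an exact closest-vector/shortest-dual-vector computation combined with a transference bound (Lagarias--Lenstra--Schnorr or Banaszczyk).
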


\begin{theorem}\label{thm:fptnt}
\localBudgetAbrv \ is fixed-parameter tractable with respect to the number of agents and 
good types.
\end{theorem}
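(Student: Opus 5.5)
We encode the problem as a mixed integer linear program and apply \cref{prop:lenstra}. For every agent $a\in A$ and every good type $\ell\in[\lambda]$, introduce an integer variable $x_{a,\ell}\ge 0$: the number of goods of type $G_\ell$ allocated to $a$. Goods of one type are interchangeable for all agents, so an allocation is determined, up to renaming goods within a type, by these counts. For every agent $a$, introduce a continuous variable $p_a\in\R$ for its net payment. This gives $p=n\lambda$ integer variables, which depends only on the parameter.

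The constraints are as follows. The allocation must be complete: $\sum_{a\in A}x_{a,\ell}=|G_\ell|$ for every $\ell\in[\lambda]$. Payments must be budget-feasible: $p_a\ge -b_a$ for every $a$, and $\sum_{a\in A}p_a=0$. Envy-freeness uses the additivity of valuations. Writing $v_a(\ell)$ for the common value agent $a$ assigns to goods of type $\ell$, we require for all $a,a'\in A$
\[
\sum_{\ell=1}^{\lambda} v_a(\ell)\,x_{a,\ell}+p_a\;\ge\;\sum_{\ell=1}^{\lambda} v_a(\ell)\,x_{a',\ell}+p_{a'} .
\]
All of these constraints are linear in the variables and have integer coefficients. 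There are $O(n^2+\lambda+n)$ constraints, and their coefficients are bounded by the input numbers, so the bit-size $L$ is polynomial in the input size. This holds even when the good multiplicities $|G_\ell|$ are encoded in binary.

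For correctness, a feasible point yields an allocation by giving each agent $a$ any $x_{a,\ell}$ goods of type $\ell$, together with the payments $\mathbf p$. By construction this outcome is envy-free and budget-feasible. Conversely, any budget-feasible envy-free allocation with payments gives a feasible point by counting goods per type. By \cref{prop:lenstra}, feasibility can be decided in time $2^{O(n\lambda\log(n\lambda))}\cdot\mathrm{poly}(L)$, which is FPT in $n+\lambda$.

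The main subtlety is justifying the type compression: valuations depend only on the type counts, so the number of goods $m$ never appears among the integer variables. The payments are genuinely real-valued and therefore stay continuous, which Lenstra's algorithm allows. I expect no further obstacle beyond confirming that the input length $L$ is polynomially bounded.
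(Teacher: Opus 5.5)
Your proposal is correct and is essentially the paper's proof: the same MILP with $n\lambda$ integer count variables $x_{a,\ell}$, continuous payments $p_a$, and envy-freeness, budget, completeness and balance constraints, solved via \cref{prop:lenstra}. Your extra remarks are accurate and slightly more careful than the paper's write-up: the type compression, and the bit-size $L$ staying polynomial even for binary-encoded multiplicities.
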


\begin{proof}
The MILP uses an integer variable $0 \leq x_{i,\ell} \leq m_\ell$ for each agent $a_i \in A$ and each good type $\ell \in [\lambda]$, where $x_{i,\ell}$ denotes the number of goods of type $\ell$ allocated to agent $a_i$, and $m_\ell$ is the total available supply of good type $\ell$. Additionally, we use variables $p_{a_i} \in \R$ to denote the net payments of each agent $a_i \in A$. With slight abuse of notation, we refer to $g_\ell$ as a good of type $\ell$. The MILP includes the following constraints:
\begin{align*}
\sum_{\ell \in [\lambda]}x_{i,\ell} \cdot v_{a_i}(g_\ell) + p_{a_i} &\geq \sum_{\ell \in [\lambda]}x_{j,\ell} \cdot v_{a_i}(g_\ell) + p_{a_j} \\ &\qquad \qquad \forall a_i, a_j \in A \\ 
p_{a_i}&\geq -b_{a_i} \qquad \forall a_i\in A\\
\sum_{a_i\in A}x_{i,\ell}&=m_\ell \quad \qquad \forall \ell\in [\lambda]\\
\sum_{a_i\in A}p_{a_i}&=0 &
\end{align*}

The first constraint ensures that the allocation with payments is envy-free, while the second ensures that no agent exceeds their budget. The third constraint guarantees allocation completeness by requiring that all goods of each type $\ell$ are fully distributed, and the fourth enforces budget balance of the payments.

Our MILP contains $n\cdot\lambda$ integer variables (one per agent--type pair) and $\mathcal{O}(n^2)$ constraints.
Fixed-parameter tractability now follows from Proposition~\ref{prop:lenstra}.
\end{proof}

Let $v_{\max}$ be the maximum good valuation. Since the number of good types is bounded by $(v_{\max} + 1)^n$, we obtain:

\begin{corollary}\label{thm:value} \localBudgetAbrv{} is fixed-parameter tractable with respect to the number of agents $n$ and the maximum valuation of any good. \end{corollary}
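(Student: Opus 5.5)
The corollary follows by bounding the number of good types $\lambda$ by a function of $n$ and $v_{\max}$, and then invoking \cref{thm:fptnt}. Recall that two goods $g,g'$ are of the same type exactly when $v_a(g)=v_a(g')$ for every $a\in A$. Each good $g$ therefore determines its type through its valuation vector $(v_{a_1}(g),\dots,v_{a_n}(g))$, and every entry lies in $\{0,1,\dots,v_{\max}\}$ because valuations map into $\N$. The map from goods to valuation vectors is constant on each type and takes distinct values on distinct types. Hence $\lambda\le (v_{\max}+1)^n$.

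Next we substitute this bound into the MILP of \cref{thm:fptnt}. That MILP has $p=n\lambda\le n(v_{\max}+1)^n$ integer variables, and its bit-size $L$ is polynomial in the input size. The type multiplicities $m_\ell$ and the coefficients $v_{a_i}(g_\ell)$ are read off in polynomial time by grouping the goods according to their vectors. By \cref{prop:lenstra}, the running time is $2^{O(p\log p)}\cdot\mathrm{poly}(L)$. Since $p$ is bounded by a computable function of $n+v_{\max}$ alone, this running time has the form $f(n,v_{\max})\cdot\mathrm{poly}(|I|)$, which establishes fixed-parameter tractability.

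The only delicate point is that the parameter bound must not depend on $m$. This holds because the MILP has no integer variables indexed by individual goods. Goods enter only through the counts $m_\ell$, which appear as constants in the constraints and whose encoding length is logarithmic in $m$. No further obstacle arises.
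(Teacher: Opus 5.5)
Your proposal is correct and follows the paper's argument. Like the paper, you bound the number of good types by $(v_{\max}+1)^n$ via the goods' valuation vectors and then apply \cref{thm:fptnt}; the only difference is that you spell out the resulting MILP running time via \cref{prop:lenstra} rather than citing the theorem directly.
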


\subsection{When agents have identical valuations}

Next we look at agents with identical valuations. This case captures symmetric situations where agents value all goods equally, for example identical shifts or machines.

Let $v$ denote the common valuation function shared by all agents. Under this assumption, the requirement of envy-freeness imposes significant structural constraints on 
any 
allocation.

\begin{lemma}\label{thm:identical}
Let $(\mathcal{X},p)$ be an envy-free allocation with budget-balanced payments among $n$ agents with identical valuations $v$. Then every agent receives the same effective utility $u_a(\allocation,\mathbf{p}) =\frac{v(G)}{n}$.

\end{lemma}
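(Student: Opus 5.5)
Since all agents share the valuation $v$, envy-freeness between any two agents $a,a'$ gives two inequalities. Agent $a$ not envying $a'$ gives $v(X_a)+p_a \ge v(X_{a'})+p_{a'}$. Agent $a'$ not envying $a$ gives the reverse, $v(X_{a'})+p_{a'} \ge v(X_a)+p_a$. The key point is that each agent evaluates the other's bundle with the same function $v$ that the other uses for their own bundle. So the right-hand side of each inequality is exactly the other agent's effective utility. Combining the two, we obtain $u_a(\allocation,\mathbf{p}) = u_{a'}(\allocation,\mathbf{p})$ for every pair $a,a'\in A$. Hence all effective utilities equal a common value $u$.

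To pin down $u$, sum the effective utilities over all agents:
\[
n\,u=\sum_{a\in A}\bigl(v(X_a)+p_a\bigr)=v\Bigl(\biguplus_{a\in A}X_a\Bigr)+\sum_{a\in A}p_a .
\]
This uses two facts. First, $v$ is additive and $\allocation$ is a complete allocation, i.e., a partition of $G$, so $\sum_a v(X_a)=v(G)$. Second, the payments are budget-balanced, so $\sum_a p_a=0$. Therefore $n\,u=v(G)$, which gives $u=v(G)/n$.

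There is no real obstacle here; the only care needed is to note explicitly where each hypothesis is used. Completeness of the allocation and additivity give the sum $v(G)$, and budget balance removes the payment term. Budget feasibility ($p_a\ge -b_a$) is not needed for this statement at all.
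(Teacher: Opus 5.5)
Your proof is correct and follows the paper's argument exactly: under identical valuations, the two envy-freeness inequalities between any pair of agents force equal effective utilities, and summing over all agents (complete allocation, additivity, budget balance) gives $n\,u = v(G)$. You are also right that the budget constraints $p_a \ge -b_a$ play no role here.
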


\begin{proof}  

For any two agents $a,a' \in A$, envy-freeness requires
$v_a(X_a) + p_a \;\geq\; v_a(X_{a'}) + p_{a'}$, 
and 
$v_{a'}(X_a) + p_a \;\leq\; v_{a'}(X_{a'}) + p_{a'}$,
which, as $v = v_a = v_{a'}$ under identical valuations, implies
$v(X_a) + p_a = v(X_{a'}) + p_{a'} $ for all $a,a' \in A$.
Hence, all agents must receive the same effective utility in any envy-free allocation with payments.  

Additionally, as payments are budget-balanced and allocations complete, the total effective utility across all agents is
\[
\sum_{a \in A} \big(v(X_a) + p_a\big)
= \sum_{a \in A} v(X_a) + \sum_{a \in A} p_a
= v(G) + 0.
\]
Therefore, for all agents to receive the same utility, we have
$u_a(\allocation,\mathbf{p}) = v(X_a) + p_a = \frac{v(G)}{n}$ for all $a \in A$, in every envy-free, budget-feasible allocation with payments $(\allocation,\mathbf{p})$.
\end{proof}

This structural characterization significantly constrains the space of feasible allocations. We use this to show that \localBudgetAbrv{} can be solved more efficiently under identical valuations.

\begin{theorem}\label{thm:fptm}
For agents with identical valuations, \localBudgetAbrv{} can be solved in 

$\mathrm{poly}(n,m)\cdot 2^{m}$ time.
\end{theorem}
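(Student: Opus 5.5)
By \cref{thm:identical}, in any envy-free budget-balanced outcome every agent $a$ has effective utility exactly $u^\ast := v(G)/n$. The payments are therefore forced: $p_a = u^\ast - v(X_a)$. Budget balance then holds automatically, since $\sum_a p_a = v(G) - v(G) = 0$. Envy-freeness also holds automatically, since all agents share $v$ and all utilities are equal. The only remaining requirement is feasibility, $v(X_a) - u^\ast \le b_a$. The question thus reduces to whether there is a partition of $G$ into bundles with $v(X_a) \le u^\ast + b_a =: c_a$ for every agent $a$. My plan is to decide this bin-packing-type question with a subset dynamic program over the goods.

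First I dispose of agents with empty bundles. An empty bundle is always feasible because $c_a \ge u^\ast \ge 0$. Hence only at most $m$ agents receive goods, and there is no reason to give goods to agents with small capacities. I sort the agents by non-increasing capacity $c_{a_1}\ge\dots\ge c_{a_n}$. An exchange argument shows that if some feasible assignment gives non-empty bundles to an agent set $S$, then we may instead give the bundles to the top $|S|$ agents. Concretely, sort the used bundles by value and match them to the top capacities. The agents in $S$ sorted by capacity dominate the top $|S|$ agents pointwise, so every bundle still fits.

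The dynamic program is then $D[i][Y]$ for $i\in\{0,\dots,\min(n,m)\}$ and $Y\subseteq G$. It is true iff the goods in $Y$ can be partitioned among agents $a_1,\dots,a_i$ respecting their capacities. The transition $D[i][Y]=\bigvee_{Z\subseteq Y,\, v(Z)\le c_{a_i}} D[i-1][Y\setminus Z]$ costs $3^m$ in total, which is too slow.

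The main obstacle is to get $2^m$ rather than $3^m$. My first attempt is the standard trick for bin packing with sorted capacities: process the agents in order and the goods one at a time. Store, for each $Y\subseteq G$, the lexicographically best state (fewest agents opened, then least load in the current agent) reachable by packing exactly $Y$. Each state extends by adding one good $g\notin Y$, either to the current agent if it fits or by opening the next agent. This gives $2^m\cdot m$ states-times-transitions.

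The delicate point is correctness of the greedy "min agents, then min load" dominance when capacities differ. Because agents are opened in non-increasing capacity order, a state with fewer opened agents has a current agent with at least as large a capacity. I would argue that it dominates a state with more agents even if its load is higher, by a monotonicity argument. This step needs care: the current agent's residual capacity may be smaller, but such a state can always open the next agent, reaching a capacity at least that of the other state's current agent.

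If that argument fails, the fallback is to count the agents used $i\le m$ and use fast subset convolution (or inclusion–exclusion as in set partition) to test coverage in $\mathrm{poly}(n,m)\cdot 2^m$. Finally, the instance is a \texttt{yes}-instance iff all of $G$ is packable using at most $n$ agents.
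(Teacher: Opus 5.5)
Your proposal is correct. Its skeleton matches the paper. Both use \cref{thm:identical} to force $p_a=v(G)/n-v(X_a)$ and reduce the problem to packing $G$ into bundles with $v(X_a)\le v(G)/n+b_a$. Your table $D[i][Y]$ is the paper's $T[i,X]$. Your fallback is exactly the paper's speed-up: the counting recurrence $\widetilde T[i,\cdot]=\widetilde T[i-1,\cdot]\star F_i$, evaluated by fast subset convolution over all $n$ agents. There, empty bundles are simply the term $Y=\emptyset$, so no sorting or exchange argument is needed.

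Your primary route is genuinely different. It exploits that, under identical valuations, each agent's feasibility condition is a single threshold $c_a$, so the feasible families are nested. This lets you restrict to a prefix of agents sorted by capacity and run the classical bin-packing DP, storing only the lexicographically best pair (number of opened agents, current load) per subset.

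The dominance you hedge on does hold, so the fallback is unnecessary. Take $k<k'$ and a completion from $(k',L')$ that puts $R_0$ into $a_{k'}$ and $R_j$ into $a_{k'+j}$. You can mimic it from $(k,L)$ by putting $R_j$ into $a_{k+1+j}$, because $k+1+j\le k'+j$ gives $c_{a_{k+1+j}}\ge c_{a_{k'+j}}\ge v(R_j)$. With equal $k$, the smaller load is trivially better. One checks in the same way that this order is preserved by a single-good extension, which is what your induction over $|Y|$ needs.

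As for what each route buys: yours runs in $O(m2^m)$ time after sorting, using only comparisons and additions. It avoids subset convolution and the $O(m\log n)$-bit counts that the paper's counting DP manipulates. The paper's version treats the predicates $F_i$ as a black box, so it would still work for arbitrary, non-nested per-agent feasibility conditions.

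One small slip: in your exchange argument, it is the top $|S|$ agents that dominate the agents of $S$ pointwise, not the other way around.
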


\begin{proof}
We construct a dynamic programming (DP) table $T$ with entries $T[i, X] \in \{1, 0\}$,
where $i \in [n]$ and $X \subseteq G$. We interpret $T[i, X] = 1$ to mean that the subset $X$ can be 
allocated to the first $i$ agents within their respective budget constraints.
For all $X \subseteq G$,
\[
T[1, X] = 1 \quad \text{if and only if} \quad \frac{v(G)}{n} - v(X) \geq - b_{a_1}.
\]
\[
\text{For $i > 1$, }
T[i, X] = \bigvee_{\substack{X' \subseteq X \\ \frac{v(G)}{n} - v(X') \geq - b_{a_i}}} T[i-1, X \setminus X'].
\]

We prove the correctness of the recurrence in \cref{thm:fptm} by induction on the index $i$.

\textbf{Base Case ($i = 1$):}  
According to \cref{thm:identical}, under identical valuations, each agent's effective utility must equal $\frac{v(G)}{n}$. For each $X \subseteq G$, assigning $X$ to agent $a_1$ thus requires a transfer
\[
p_{a_1} = \frac{v(G)}{n} - v(X).
\]
The allocation is budget-feasible if and only if $p_{a_1} \geq - b_{a_1}$, i.e.,
\[ \frac{v(G)}{n} - v(X) \geq - b_{a_1}.
\]
Thus, the definition of $T[1, X]$ correctly reflects feasibility for $a_1$.

\textbf{Induction Hypothesis:}  
Assume that for some $i \geq 1$, $T[i, X] = 1$ if and only if $X$ can be feasibly allocated to agents $a_1, \dots, a_i$ within their budget constraints.

\textbf{Induction Step:}  
We now show that $T[i+1, X] = 1$ if and only if $X$ can be feasibly allocated to agents $a_1, \dots, a_{i+1}$ under the same condition.

Fix an arbitrary subset $X \subseteq G$. Define $T \coloneq T[i+1, X]$, and let $O \coloneq O(i+1, X)$ indicate whether the set $X$ can be allocated to the first ${i+1}$ agents.

We distinguish between the two cases $O = 1$ and $O = 0$.

\emph{Case $O = 1$:} Then the set $X$ can be split into $X' \subseteq X$, allocated to agent $a_{i+1}$, such that 
\[
  \frac{v(G)}{n} - v(X')   \geq -b_{a_{i+1}},
\]
and the remainder $X \setminus X'$ allocated to agents $a_1, \dots, a_i$. By the induction hypothesis, this implies that
$T[i, X \setminus X'] = 1$.
From the recurrence definition, it follows that $T[i+1, X] = 1$, and hence $O = 1$ implies $T = 1$.

\emph{Case $O = 0$:} Suppose, for contradiction, that $T = 1$. Then $T[i+1, X] = 1$ implies the existence of $X' \subseteq X$ such that:
\[
  \frac{v(G)}{n} - v(X')   \geq - b_{a_{i+1}}
\quad \text{and} \quad
T[i, X \setminus X'] = 1.
\]

By the induction hypothesis, $X \setminus X'$ can be feasibly allocated to $a_1, \dots, a_i$, and $X'$ is feasible for $a_{i+1}$, contradicting $O = 0$. Hence, $T = O$.

Since we have shown that $O = 1$ implies $T = 1$ and $T = 1$ implies $O = 1$ for all $i \in \{1,\dots,n\}$ and all $X \subseteq G$, it follows that $T = O$. Therefore, the dynamic program correctly decides feasibility.

By \cref{thm:identical}, such an allocation corresponds to a budget-feasible envy-freeable allocation, ensuring equal effective utility for all agents under budget-feasibility. Thus, the instance is a \texttt{yes}-instance if and only if
$T[n, G] = 1$.

\textbf{Running time:}
The dynamic programming table $T$ has $n \cdot 2^m$ entries, where $m = |G|$ is the number of goods. For each entry $T[i, X]$, the recurrence examines all subsets $X' \subseteq X$, of which there are at most $2^m$. By applying subset convolution \cite{dblp1894779}, the recurrence can be evaluated efficiently in $\mathcal{O}(n \cdot \text{poly}(m) \cdot 2^m)$ time.

This shows that \localBudgetAbrv{} is fixed-parameter tractable with respect to the number of goods $m$ when agents have identical additive valuations.
\end{proof}

We show that the dynamic program from \cref{thm:fptm} can be
implemented in $\mathrm{poly}(n,m)\cdot 2^m$ time using fast
subset convolution \cite{dblp1894779}.

\begin{proof}
For each agent $a_i$, define the indicator function
\[
F_i(Y)=
\begin{cases}
1, & \text{if } \dfrac{v(G)}{n}-v(Y)\geq -b_{a_i},\\
0, & \text{otherwise},
\end{cases}
\qquad Y\subseteq G.
\]

Let $\widetilde{T}[i,X]$ denote the number of ways to partition
$X$ into feasible bundles for agents $a_1,\ldots,a_i$. We initialize
\[
\widetilde{T}[0,\emptyset]=1
\text{ and }
\widetilde{T}[0,X]=0
\text{ for every nonempty }X\subseteq G.
\]
For every $i\in[n]$ and $X\subseteq G$, we have
\[
\widetilde{T}[i,X]
=
\sum_{Y\subseteq X}
\widetilde{T}[i-1,X\setminus Y]F_i(Y).
\]
For a fixed $Y$, the corresponding term counts all feasible allocations in which agent $a_i$ receives $Y$. Consequently,
\[
T[i,X]=1
\quad\Longleftrightarrow\quad
\widetilde{T}[i,X]>0.
\]
In particular, the instance is a \texttt{yes}-instance if and only if
$\widetilde{T}[n,G]>0$.

For functions $f,h\colon 2^G\to\mathbb{Z}$, their subset convolution
is defined by
\[
(f\star h)(X)
=
\sum_{Y\subseteq X}f(X\setminus Y)h(Y).
\]
Thus, for every $i\in[n]$,
$\widetilde{T}[i,\cdot]
=
\widetilde{T}[i-1,\cdot]\star F_i$.
Using fast subset convolution, all values
$\widetilde{T}[i,X]$, $X\subseteq G$, can be computed from
$\widetilde{T}[i-1,\cdot]$ and $F_i$ in
$\mathrm{poly}(m)\cdot 2^m$ arithmetic operations
\cite{dblp1894779}.

Thus, each agent-stage update \(\widetilde{T}[i-1]\mapsto \widetilde{T}[i]\) costs \(\mathcal{O}(\mathrm{poly}(m)\cdot 2^m)\), and over all \(n\) agents the total running time is $\mathrm{poly}(n,m)\cdot 2^m$.
\end{proof}

While this approach involves enumerating all subsets of $G$, it becomes computationally intensive when the number of distinct goods is large.
However, in many practical settings, the number of distinct good types is small, what varies is only the multiplicity of each type. This structural constraint enables a more efficient solution, as captured in the following lemma.

\begin{lemma}\label{thm:types}

For agents with identical valuations and $\lambda$ good types, \localBudgetAbrv{} can be solved in 
$\mathrm{poly}(n,m)\cdot (1+\frac{m}{\lambda})^{2\lambda}$.
Hence, the problem is polynomial-time solvable for every fixed $\lambda$.
\end{lemma}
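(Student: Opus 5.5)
We reuse the structure of the dynamic program from \cref{thm:fptm}, but index table entries by \emph{type-count vectors} instead of subsets of $G$. Let $m_1,\dots,m_\lambda$ be the multiplicities of the $\lambda$ good types, so $\sum_\ell m_\ell = m$. Under identical valuations, goods of the same type are interchangeable for every agent. Hence a bundle is described, up to relabeling, by a vector $y=(y_1,\dots,y_\lambda)$ with $0\le y_\ell\le m_\ell$, and its value is $v(y)=\sum_\ell y_\ell v(g_\ell)$.

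By \cref{thm:identical}, in any envy-free budget-balanced outcome every agent has effective utility exactly $v(G)/n$. So agent $a_i$ receiving $y$ must be paid $p_{a_i}=v(G)/n-v(y)$. Budget-feasibility for $a_i$ is then exactly
\[
\frac{v(G)}{n}-v(y)\ge -b_{a_i}.
\]
Conversely, if every agent receives a bundle satisfying this inequality and the bundles partition $G$, then the payments defined this way sum to $v(G)-v(G)=0$. They equalize all utilities, and since valuations are identical, equal utilities give envy-freeness. The problem is therefore exactly the question of whether the vector $(m_1,\dots,m_\lambda)$ can be split into $n$ vectors $y^{(1)},\dots,y^{(n)}$ with $y^{(i)}$ feasible for $a_i$.

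The DP has a table $T[i,x]$ for $i\in\{0,\dots,n\}$ and $0\le x\le (m_1,\dots,m_\lambda)$ componentwise. We set $T[0,0]=1$ and
\[
T[i,x]=\bigvee_{y\le x,\ \frac{v(G)}{n}-v(y)\ge -b_{a_i}} T[i-1,x-y].
\]
Correctness follows by the same induction on $i$ as in the proof of \cref{thm:fptm}, with subsets replaced by count vectors. The answer is \texttt{yes} iff $T[n,(m_1,\dots,m_\lambda)]=1$.

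The main step requiring care is the running time, i.e.\ bounding the number of count vectors. This number is $\prod_\ell (m_\ell+1)$. By the AM--GM inequality with $\sum_\ell (m_\ell+1)=m+\lambda$,
\[
\prod_\ell (m_\ell+1)\le \left(1+\frac{m}{\lambda}\right)^{\lambda}.
\]
Each entry enumerates all $y\le x$, which is at most another factor $(1+\frac m\lambda)^\lambda$. There are $n+1$ stages and each feasibility test is polynomial. Altogether the running time is $\mathrm{poly}(n,m)\cdot(1+\frac m\lambda)^{2\lambda}$.

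For fixed $\lambda$, this is at most $\mathrm{poly}(n,m)\cdot (1+m)^{2\lambda}$, which is polynomial. Since the goods are explicitly listed, $m$ is part of the input size, so the algorithm runs in polynomial time for every fixed $\lambda$.
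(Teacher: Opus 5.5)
Your proposal is correct and follows essentially the same route as the paper. Both use a dynamic program over good-type count vectors with the per-agent budget condition $\frac{v(G)}{n}-v(y)\ge -b_{a_i}$ obtained from \cref{thm:identical}, together with the AM--GM bound $(1+\frac{m}{\lambda})^{\lambda}$ on the number of entries per stage and on the number of candidate bundles per entry. Your one addition is an explicit check that feasible bundles with payments $\frac{v(G)}{n}-v(y)$ are budget-balanced and envy-free, where the paper only appeals to \cref{thm:identical}.
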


\begin{proof}
Let $G$ contain $\ell_j$ goods of type $j \in [\lambda]$. Since all goods of a given type are indistinguishable under identical valuations, we can represent any subset $X \subseteq G$ as a vector $\mathbf{k} = (k_1, \dots, k_\lambda) \in \N^\lambda$, where $k_j$ denotes the number of goods of type $j$ in $X$. 

We define a dynamic programming table, analogous to the one in the proof of \cref{thm:fptm},
$T[i, k_1, \dots, k_\lambda] \in \{0, 1\}$
to indicate whether it is feasible to allocate the bundle represented by $\mathbf{k}$ to the first $i$ agents such that each has sufficient budget to ensure an effective utility of exactly $\frac{v(G)}{n}$.

Here, $i \in [n]$, and for each $j \in [\lambda]$, $k_j \in \{0, 1, \dots, \ell_j\}$. The table is initialized as follows:

\[
T[1, k_1, \dots, k_\lambda] =
\begin{cases}
1 & \text{if } \frac{v(G)}{n} - \sum_{j=1}^{\lambda} k_j \cdot v_j \geq - b_{a_1}, \\
0 & \text{otherwise},
\end{cases}
\]
where $v_j$ denotes the value of a single good of type $j$.

For $i > 1$, define:
\[
T[i, \mathbf{k}] = \bigvee_{\substack{\mathbf{k'} \leq \mathbf{k} \\ \frac{v(G)}{n} - \sum_{j=1}^{\lambda} k_j' \cdot v_j \geq - b_{a_i}}} T[i-1, \mathbf{k} - \mathbf{k'}],
\]
where $\mathbf{k'}, \mathbf{k} \in \N^\lambda$ and $\mathbf{k'} \leq \mathbf{k}$ component-wise.

This ensures that agent $a_i$ receives a feasible bundle $\mathbf{k'}$, and that the remaining set $\mathbf{k} - \mathbf{k'}$ can be feasibly allocated to agents $a_1, \dots, a_{i-1}$.

We proceed by induction on $i$ to prove the correctness of the recurrence.

\textbf{Base Case ($i = 1$):}  
For any $\mathbf{k}$ with $k_j \leq \ell_j$ for all $j \in [\lambda]$, the transfer needed for agent $a_1$ to receive the bundle is:
\[
p_{a_1} = \frac{v(G)}{n} - \sum_{j=1}^{\lambda} k_j \cdot v_j.
\]
We set $T[1, \mathbf{k}] = 1$ if and only if $p_{a_1} \geq - b_{a_1}$, thus correctly reflecting the budget-feasibility for agent $a_1$.

\textbf{Induction Hypothesis:}  
Assume that for all vectors $\mathbf{k}$, $T[i, \mathbf{k}] = 1$ if and only if the bundle $\mathbf{k}$ can be feasibly allocated to agents $a_1, \dots, a_{i}$.

\textbf{Induction Step:} We aim to show that $T[i+1, \mathbf{k}] = 1$ if and only if $\mathbf{k}$ can be feasibly allocated to agents $a_1, \dots, a_{i+1}$ under the same condition.

Fix $\mathbf{k}$, and let $T \coloneqq T[i+1, \mathbf{k}]$ and $O \coloneqq O(i+1, \mathbf{k})$
 indicate whether the set $\mathbf{k}$ can be allocated to the first $i+1$ agents. 

We distinguish between the two cases $O = 1$ and $O = 0$.

\emph{Case $O = 1$:} Then there exists a feasible decomposition $\mathbf{k} = \mathbf{k'} + \mathbf{k''}$ where $\mathbf{k'}$ is allocated to agent $a_{i+1}$ such that:
\[
  \frac{v(G)}{n} - \sum_{j=1}^{\lambda} k_j' \cdot v_j    \geq - b_{a_{i+1}},
\]
and $\mathbf{k''} = \mathbf{k} - \mathbf{k'}$ is feasible for the previous $i$ agents. By the induction hypothesis, $T[i, \mathbf{k''}] = 1$, and the recurrence thus sets $T[i+1, \mathbf{k}] = 1$. Hence, $O = 1$ implies $T = 1$.

\emph{Case $O = 0$:}
Suppose, for contradiction, that $T = 1$. Then there exists some $\mathbf{k'} \leq \mathbf{k}$ such that:
\[
  \frac{v(G)}{n} - \sum_{j=1}^{\lambda} k_j' \cdot v_j    \geq - b_{a_{i+1}}
\quad \text{and} \quad
T[i, \mathbf{k} - \mathbf{k'}] = 1.
\]
By the induction hypothesis, $\mathbf{k} - \mathbf{k'}$ is feasible for $a_1, \dots, a_i$, and agent $a_{i+1}$ has a high enough budget to receive $\mathbf{k'}$. This contradicts $O = 0$, so $T = 0$.

Hence, $T = O$ for all entries.

By \cref{thm:identical}, such an allocation ensures equal effective utility for all agents can be achieved. Therefore, the instance is a \texttt{yes}-instance if and only if 
$T[n, \ell_1, \dots, \ell_\lambda] = 1$.

\textbf{Running time:} The total number of goods is $m = \sum_{j=1}^{\lambda} \ell_j$. 
Applying the AM--GM inequality to the terms $(1+\ell_1),\dots,(1+\ell_\lambda)$, we obtain the bound
$\!\big(1+\tfrac{m}{\lambda}\big)^{\lambda}$.
Thus the table contains at most
$n\cdot(1+\tfrac{m}{\lambda})^{\lambda}$ entries, each updated over at most $(1+\tfrac{m}{\lambda})^{\lambda}$ admissible bundles. 
Combining these factors gives an overall running time of
$poly(n,m)\cdot \Big(1+\tfrac{m}{\lambda}\Big)^{2\lambda}$.

This is polynomial in the input size when $\lambda$ is constant. Hence, \localBudgetAbrv{} can be solved in polynomial time for a constant number of good types $\lambda$ and identical agent valuations.
\end{proof}

\subsection{When there are \(\tau\) agent types and \(\lambda \) good types}

We now extend the identical-valuation case to instances with $\tau$ agent types and $\lambda$ good types, covering settings with a small number of worker or machine classes and categories of interchangeable goods, such as shifts or equipment.

\begin{theorem}\label{thm:more-types}
Assuming integral payments, \localBudgetAbrv{} can be solved in
$\mathrm{poly}(n,m)\cdot (1 + n \cdot b_{\max})^{\tau}
\left(1+\frac{m}{\lambda}\right)^{\lambda(\tau+1)}$
time, where $b_{\max}:=\max_{a\in A} b_a$ is the maximum individual budget.
Thus, for constant numbers of agent types $\tau$ and good
types $\lambda$, the problem is solvable in pseudo-polynomial time.
\end{theorem}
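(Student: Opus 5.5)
The plan is to exploit the fact that agents of the same type behave exactly as in the identical-valuation setting. I would guess, for every agent type, its aggregate bundle and aggregate net payment, and then solve each type separately with the dynamic program of \cref{thm:types}, augmented by cross-type envy constraints.

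\textbf{Step 1: equal utility within a type.} For agents $a,a'\in A_t$, the argument of \cref{thm:identical} applies verbatim: $v_a=v_{a'}$ and envy-freeness in both directions force $v_t(X_a)+p_a=v_t(X_{a'})+p_{a'}$. Hence every agent of type $t$ has the same effective utility $u_t$. Let $K_t\in\N^\lambda$ be the count vector of goods given to type $t$, and let $P_t=\sum_{a\in A_t}p_a$ be the total net payment to type $t$. Then
\[
u_t=\frac{v_t(K_t)+P_t}{|A_t|}.
\]
With integral payments, $u_t$ is integral. Each $P_t$ is an integer in a range of size $O(n b_{\max})$: type $t$ pays at most $|A_t|b_{\max}$, and it can receive at most what all other agents pay in total.

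\textbf{Step 2: guessing.} I would enumerate the vectors $K_1,\dots,K_{\tau-1}$; then $K_\tau$ is determined by the type multiplicities $\ell_1,\dots,\ell_\lambda$ and must be checked to be non-negative. By the AM--GM bound of \cref{thm:types}, this gives at most $(1+\frac{m}{\lambda})^{\lambda(\tau-1)}$ choices. I would also enumerate $P_1,\dots,P_\tau$ subject to $\sum_t P_t=0$, which gives at most $(1+nb_{\max})^{\tau}$ choices up to a constant-factor adjustment of the range. These guesses fix all $u_t$.

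\textbf{Step 3: reducing cross-type envy to a per-bundle test.} Take an agent $a'\in A_{t'}$ with bundle $X$. Its payment is forced to be $p_{a'}=u_{t'}-v_{t'}(X)$. Therefore an agent of type $t$ does not envy $a'$ if and only if
\[
v_t(X)-v_{t'}(X)\le u_t-u_{t'}.
\]
This condition depends only on the bundle $X$ and on the guessed values $u_1,\dots,u_\tau$. Call a bundle vector $\mathbf k'$ \emph{admissible for $a'$} if it satisfies all of the following:
\begin{itemize}
\item the displayed inequality holds for every type $t$;
\item the budget condition $u_{t'}-v_{t'}(\mathbf k')\ge -b_{a'}$ holds.
\end{itemize}

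\textbf{Step 4: per-type dynamic program.} For each type $t'$, I would run the dynamic program of \cref{thm:types} over sub-vectors of $K_{t'}$ and over the agents of $A_{t'}$, allowing only admissible bundles at each step. This decides whether $K_{t'}$ can be split among the agents of $A_{t'}$ in a way that satisfies all constraints. Each run costs $\mathrm{poly}(n,m)(1+\frac m\lambda)^{2\lambda}$.

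\textbf{Correctness and running time.} If every type succeeds, setting $p_a=u_t-v_t(X_a)$ yields an outcome that is envy-free within types (by equal utilities) and across types (by Step 3), and that respects all budgets. The payments are also budget-balanced, because $\sum_t\bigl(|A_t|u_t-v_t(K_t)\bigr)=\sum_t P_t=0$. Conversely, any integral solution induces guesses $(K_t,P_t)$ under which every per-type dynamic program succeeds. Multiplying the number of guesses by the per-type cost gives
\[
\mathrm{poly}(n,m)\,(1+nb_{\max})^{\tau}\,\Bigl(1+\frac m\lambda\Bigr)^{\lambda(\tau-1)+2\lambda},
\]
which matches the claimed bound.

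\textbf{Expected obstacle.} The main difficulty is the bookkeeping in Step 2: arguing that the integral range of the guessed $P_t$ (equivalently, of $u_t$) has size $1+nb_{\max}$ up to polynomial factors, and handling the divisibility of $v_t(K_t)+P_t$ by $|A_t|$. Guesses for which this quantity is not divisible by $|A_t|$ are simply discarded. If the range of $P_t$ turns out to be too coarse, I would instead guess $u_t$ directly, bounded relative to the type's average bundle value.
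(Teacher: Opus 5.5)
Your proposal is correct and follows essentially the same route as the paper. You guess the per-type total payments $P_t\in[-|A_t|b_{\max},(n-|A_t|)b_{\max}]$ and goods-count vectors, derive the targets $u_t$, and run a per-type dynamic program over admissible bundles; your cross-type test $v_t(X)-v_{t'}(X)\le u_t-u_{t'}$ is exactly the paper's constraint $\sum_j k_j v_{t}(g_j)+p_{a'}\le u_{t}$, and discarding guesses with $|A_t|\nmid v_t(K_t)+P_t$ is equivalent to the paper's requirement $p_a\in\mathbb{Z}$ in the admissible set.
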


Let \( I \) be an instance of the \localBudgetAbrv{} problem.  
We partition the set of agents into \(\tau\) disjoint types \(A_1, A_2, \dots, A_\tau\),  
such that all agents in \(A_t\) share the same valuation function \(v_t(\cdot)\).  
Let \(n_t = |A_t|\) denote the number of agents of type \(t\).

We now introduce notation that will be used throughout the proof.

Let $(\mathcal{X}^*, p^*)$ be an envy-free, budget-feasible allocation with integral payments.

    For each agent type \(t \in [\tau]\), we denote by \(P_t^*\) the \emph{total payment} made by all agents of that type in $(\mathcal{X}^*, p^*)$, i.e.,
    $P_t^* = \sum_{a \in A_t} p_a^*$.

    Let \(\Gamma_t^*\) denote the set of goods allocated to all agents of type \(t\), i.e.,
$\Gamma_t^* = \bigcup_{a \in A_t} X_a^*$.

For each good type \(j \in [\lambda]\), let \(m_{t,j}^*\) represent the total number of items of that type assigned to agents of type \(t\).
The corresponding item-count vector of \(\Gamma_t^*\) is then
\[
\mathbf{m}_t^* = (m_{t,1}^*,\, m_{t,2}^*,\, \dots,\, m_{t,\lambda}^*).
\]

    Because agents of the same type $t$ evaluate items and payments identically, an envy-free outcome requires them to receive equal utility. The resulting \emph{per-agent utility induced by the allocation and payments} is
     \[u_t^* = \frac{1}{n_t}\bigl(v_t(\Gamma_t^*) + P_t^*\bigr),\]
    which follows directly from the equality \(v_t(\Gamma_t^*) + P_t^* = n_t u_t^*\).
    
    Collecting these across all types, we define the \emph{utility vector}
    $\mathbf{u}^* = (u_1^*, u_2^*, \dots, u_\tau^*)$,
    which represents the per-type utilities induced by the envy-free, budget-feasible solution.

Given a utility vector, we define the notion of an admissible bundle set of an agent $a \in A_t$ as follows. 
\begin{definition}[Admissible bundle set]
\label{def:admissblebundle}
\begin{align*}
    \mathcal{K}_a(\mathbf{u}) \coloneqq 
    \Bigg\{ (k_1, \dots, k_\lambda) &\in \mathbb{Z}_{\ge 0}^\lambda\ :\ 
    \\[4pt]
    p_a &= u_t - \sum_{j=1}^\lambda k_j v_t(g_j)\ \in \mathbb{Z},
    \\[4pt]
    p_a &\ge -b_a,
    \\[4pt]
    \sum_{j=1}^\lambda k_j v_{t'}(g_j) + p_a &\le u_{t'}\ \ \forall\, t' \in [\tau],
    \\[4pt]
    0 &\le k_j \le m_{t,j}\ \ \forall\, j \in [\lambda]
    \Bigg\}.
\end{align*}
\end{definition}

Here:

\begin{itemize}
    \item The first condition ensures that the agent’s utility equals the target per-type utility \(u_t\), with a payment \(p_a\) consistent with the budget.
    \item The second condition enforces envy-freeness by making sure that the set is not valued higher by some agent type $t'$ than their target utility $u_{t'}$.
    \item The budget constraint \(p_a \ge -b_a\) guarantees that the payment is feasible for agent $a$.
    \item The final bound \(0 \le k_j \le m_{t,j}\) ensures that the allocation does not exceed the available number of goods of each type.

\end{itemize}

We are now ready to formally define our algorithm.
Given an instance of \localBudgetAbrv{} with $\tau$ types of agents and $\lambda$ types of goods, the procedure is as follows:

We first guess the total payments and goods allocated to each agent type,
enumerating all integer vectors of total payments
    \((P_1, \dots, P_\tau)\) such that 
    \(\sum_{t=1}^\tau P_t = 0\) and each 
    \(P_t\) is an integer from the interval \([-n_t B,\, (n-n_t) B]\), 
    where \(B\) is the maximum individual budget.  
    For each agent type \(t \in [\tau]\), we also guess a \emph{goods vector} 
    \(\Gamma_t = (m_{t,1}, m_{t,2}, \dots, m_{t,\lambda})\),
    where \(\ell_{t,j}\) denotes the number of goods of type \(g_j\) allocated in total to agents of type \(t\) and $\sum_{t \in [\tau]} m_{t,j} = m_j$ for all $g_j \in G$.
    For each agent type $t$, we compute the per-agent target utility
    \[
    u_t = \frac{1}{n_t}\bigl(v_t(\Gamma_t) + P_t\bigr).
    \]
    which defines the \emph{guessed utility vector} 
    $\mathbf{u} = (u_1, \dots, u_\tau)$.

    We can now construct a dynamic programming table $T_t$ for every agent type $t$ with entries 
$T_t[i, \ell_{t,1}, \dots, \ell_{t,\lambda}] \in \{0,1\}$
where $i \in [0..n_t]$ indexes agents of type $t$, and each $0 \leq \ell_{t,j} \leq m_{t,j}$ denotes the number of goods of type $j$ allocated so far. Set the base case
\[
T_t[0, 0, \dots, 0] = 1, \qquad 
T_t[0, \ell_{t,1}, \dots, \ell_{t,\lambda}] = 0 
\] for all other $(\ell_{t,1}, \dots, \ell_{t,\lambda})$.

 Next, for each agent $a_i \in A_t$, we define the admissible bundle set $\mathcal{K}_{a_i}(\mathbf{u})$ as per \Cref{def:admissblebundle}. Then, for each $i \in [n_t]$ and for every feasible goods vector $\mathbf{\ell}=(\ell_{t,1}, \dots, \ell_{t,\lambda})$:
    \[
    T_t[i, \mathbf{\ell}] =
    \bigvee_{(k_1,\dots,k_\lambda) \in \mathcal{K}_{a_i}(\mathbf{u})} 
    T_t[i-1, \ell_{t,1} - k_1, \dots, \ell_{t,\lambda} - k_\lambda].
    \]

    After filling the DP tables for all types, verify whether 
    \[
    T_t[n_t, m_{t,1}, \dots, m_{t,\lambda}] = 1 
    \quad \text{for all } t \in [\tau],
    \]
    where $(m_{t,1}, \dots, m_{t,\lambda})$ is the goods vector guessed for type $t$.  
    If this condition holds for every type, the corresponding allocation and payments constitute a feasible envy-free solution and are returned; otherwise, proceed to the next guess.

    If no combination of guesses yields a feasible DP solution, the algorithm concludes that the given instance is a \texttt{no}-instance.

\begin{lemma}
\label{lem:dp-accepts-correct-guesses}
Let \( I \) be an instance of the \localBudgetAbrv{} problem with \(\tau\) types of agents and \(\lambda\) types of goods.  
There exists an envy-free and budget-feasible solution \((\allocation^*, \mathbf{p}^*)\), that can be represented by \((P_t^*)_{t \in [\tau]}\) and \((\Gamma_t^*)_{t \in [\tau]}\), with integral payments,  
if and only if for all dynamic programming tables \((T_t)_{t \in [\tau]}\) under this pair, $T_t[n_t, m_{t,1}^*, \dots, m_{t,\lambda}^*] = 1$.
\end{lemma}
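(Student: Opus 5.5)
We prove the two directions separately, fixing throughout the guess $(P_t^*)_{t\in[\tau]}$, $(\Gamma_t^*)_{t\in[\tau]}$ and the induced utility vector $\mathbf{u}^*$. The key tool is the invariant that $T_t[i,\boldsymbol{\ell}]=1$ if and only if there are bundles $\mathbf{k}^{(1)},\dots,\mathbf{k}^{(i)}$ with $\mathbf{k}^{(r)}\in\mathcal{K}_{a_r}(\mathbf{u}^*)$ for the first $i$ agents $a_1,\dots,a_i$ of $A_t$ and $\sum_r \mathbf{k}^{(r)}=\boldsymbol{\ell}$. This follows by a routine induction on $i$, mirroring the correctness proofs of \cref{thm:fptm} and \cref{thm:types}. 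The base case is $T_t[0,\mathbf 0]=1$, and the step is the recurrence over admissible $\mathbf{k}$.

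\emph{($\Rightarrow$)} Let $(\allocation^*,\mathbf{p}^*)$ be envy-free, budget-feasible and integral. First, every agent of type $t$ attains the same utility: for $a,a'\in A_t$, mutual envy-freeness with the common valuation $v_t$ forces equality, as in \cref{thm:identical}. Summing over $A_t$ gives $n_tu_t=v_t(\Gamma_t^*)+P_t^*$, so this common value is exactly $u_t^*$. Now take $a\in A_t$ and let $\mathbf{k}$ be the count vector of $X_a^*$. We check the conditions of \cref{def:admissblebundle} one by one.
\begin{itemize}
\item The equation $p_a^*=u_t^*-\sum_j k_jv_t(g_j)$ holds, and $p_a^*$ is an integer.
\item The budget condition $p_a^*\ge -b_a$ is feasibility.
\item The condition $k_j\le m^*_{t,j}$ holds because $X_a^*\subseteq\Gamma_t^*$.
\item For the envy condition, fix $t'$ and any agent $a'\in A_{t'}$, which exists whenever $n_{t'}>0$. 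Envy-freeness gives $v_{t'}(X_a^*)+p_a^*\le v_{t'}(X_{a'}^*)+p_{a'}^*=u_{t'}^*$.
\end{itemize}
Thus each agent's real bundle is admissible. The bundles of the agents in $A_t$ sum to $\mathbf{m}_t^*$, so the invariant yields $T_t[n_t,\mathbf{m}^*_t]=1$ for every $t$. We also note that $P_t^*$ lies in the enumerated range: $P_t^*\ge -n_tB$ by the budgets, and $P_t^*=-\sum_{t'\ne t}P_{t'}^*\le (n-n_t)B$.

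\emph{($\Leftarrow$)} Suppose every table accepts for the guess. The invariant extracts, for each type $t$, admissible bundles $\mathbf{k}^{(a)}$ for all $a\in A_t$ summing to $\mathbf{m}_t$. Since $\sum_t m_{t,j}=m_j$, turning these count vectors into concrete disjoint bundles gives a complete allocation. Define $p_a:=u_t-\sum_jk^{(a)}_jv_t(g_j)$; this is integral and satisfies $p_a\ge -b_a$ by admissibility. Budget balance follows because
\[
\sum_a p_a=\sum_t\bigl(n_tu_t-v_t(\Gamma_t)\bigr)=\sum_tP_t=0 .
\]
Envy-freeness holds because agent $a'$ of type $t'$ has utility exactly $u_{t'}$, while by the third admissibility condition she values every other bundle-plus-payment at most $u_{t'}$.

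The step needing the most care is the ($\Rightarrow$) envy condition together with the identification $u_t=u_t^*$. In particular, the guessed utilities must be the true common per-type utilities, which requires the equal-utility-within-type argument. Empty types must also be handled by dropping the constraint for $t'$ with $n_{t'}=0$. The remaining steps are bookkeeping.
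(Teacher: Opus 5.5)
Your proposal is correct and follows essentially the same route as the paper: the same DP invariant proved by induction on $i$ (which the paper writes out via a reference function $O_t$), admissibility of each agent's actual bundle for the forward direction, and reassembly of admissible bundles into a solution for the reverse direction. You also make explicit two steps the paper only asserts: the within-type equal-utility argument that identifies the common utility with $u_t^*$, and the budget-balance computation $\sum_a p_a=\sum_t P_t=0$.
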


\begin{proof}
For the forward implication, suppose that $I$ is a \texttt{yes}-instance of the \localBudgetAbrv{} problem with a corresponding envy-free, budget-feasible solution $(\allocation^*, \textbf{p}^*)$. Let $(P_t^*)_{t \in [\tau]}$ and $(\Gamma_t^*)_{t \in [\tau]}$ denote the sum of payments and set of goods assigned to each type $t$ in this solution, and let $\mathbf{u}^* = (u_1^*, \dots, u_\tau^*)$ be the induced per-type utilities. By the properties of the solution, for each agent $a \in A_t$, the bundle assigned to $a$ belongs to the admissible set $\mathcal{K}_a(\mathbf{u}^*)$.

We prove the correctness of the DP table for an arbitrary type $t$.
Let $O_t[i,\ell_{t,1},\dots,\ell_{t,\lambda}]$ be a reference function that indicates whether the first $i$ agents of type~$t$ can be assigned admissible bundles 
that collectively use $\ell_{t,j}$ goods of type~$j$. 
Formally, we set $O_t[i,\ell_{t,1},\dots,\ell_{t,\lambda}] = 1$ if such an allocation exists, and $O_t[i,\ell_{t,1},\dots,\ell_{t,\lambda}] = 0$ otherwise. By construction of the solution, it follows that 
\[
O_t[n_t, m_{t,1}^*, \dots, m_{t,\lambda}^*] = 1.
\]

We prove by induction on $i$ that for all $(\ell_{t,1},\dots,\ell_{t,\lambda})$,  
\[
T_t[i,\ell_{t,1},\dots,\ell_{t,\lambda}] = O_t[i,\ell_{t,1},\dots,\ell_{t,\lambda}].
\]

\textbf{Base case ($i=0$).} The DP initializes with $T_t[0,0,\dots,0]=1$ and $T_t[0,\ell_1,\dots,\ell_\lambda]=0$ for all nonzero tuples. This exactly matches $O_t[0,\ell_1,\dots,\ell_\lambda]$, since no goods are assigned when there are zero agents.

\textbf{Induction step.} Assume the claim holds for $i-1$.  

 $T \implies O$: If $T_t[i,\ell_{t,1},\dots,\ell_{t,\lambda}] = 1$, then by the DP recurrence there exists $(k_1,\dots,k_\lambda)\in \mathcal{K}_{a_i}(\mathbf{u}^*)$ such that 
\[
T_t[i-1, \ell_{t,1}-k_1,\dots,\ell_{t,\lambda}-k_{\lambda}] = 1.
\]  
By induction, $O_t[i-1, \ell_{t,1}-k_1,\dots,\ell_{t,\lambda}-k_{\lambda}] = 1$, so appending $(k_1,\dots,k_\lambda)$ as the bundle for agent $i$ gives $O_t[i,\ell_{t,1},\dots,\ell_{t,\lambda}] = 1$.

 $T \impliedby O$: Conversely, if $O_t[i,\ell_{t,1},\dots,\ell_{t,\lambda}]=1$, then the last agent’s bundle $(k_1,\dots,k_\lambda)\in \mathcal{K}_{a_i}(\mathbf{u}^*)$ ensures that 
\[
O_t[i-1, \ell_{t,1}-k_1,\dots,\ell_{t,\lambda}-k_{\lambda}] = 1,
\]  
and by induction, $T_t[i-1, \ell_{t,1}-k_1,\dots,\ell_{t,\lambda}-k_{\lambda}] = 1$, so the DP recurrence sets $T_t[i,\ell_{t,1},\dots,\ell_{t,\lambda}] = 1$.
Hence, in particular,  
$T_t[n_t, m_{t,1}^*, \dots, m_{t,\lambda}^*] = 1$.

Since $t\in[\tau]$ was arbitrary, this proves the forward implication.

For the reverse implication, suppose that, for some guessed
$(P_t^*,\Gamma_t^*)_{t \in [\tau]}$ the DP entry $T_t[n_t, m_{t,1}^*, \dots, m_{t,\lambda}^*]=1$ for all $t\in [\tau]$. By the definition of the dynamic program, this means there exists an assignment of bundles from the admissible sets to all agents of type $t$ that exactly matches $(m_{t,1}^*, \dots, m_{t,\lambda}^*)$. By construction of the admissible sets, each assigned bundle yields the target utility for that agent type, respects the corresponding budget constraints, and ensures envy-freeness across all types. Applying this reasoning iteratively for every type $t \in [\tau]$, we obtain a global allocation and payment profile that jointly satisfy feasibility, budget balance, and envy-freeness. Therefore, the instance is a \texttt{yes}-instance.
\end{proof}

Using \cref{lem:dp-accepts-correct-guesses}, we can now prove \cref{thm:more-types}.

\begin{proof}[Proof of \cref{thm:more-types}]
The algorithm enumerates all integral guesses for payments $P_t$ and subsets $\Gamma_t$ for each type $t \in \tau$, for which we have budget balance ($\sum_{t\in \tau} P_t = 0$) and completeness ($\bigcup_{t\in [\tau]} \Gamma_t = G$). It then computes the corresponding utility vector $\mathbf{u}$, and the corresponding type-wise DPs $(T_t)_{t\in\tau}$ to check feasibility. By \cref{lem:dp-accepts-correct-guesses}, the DP returns $1$ for all types if and only if the instance is a \texttt{yes}-instance, ensuring correctness.

Conversely, by \cref{lem:dp-accepts-correct-guesses}, if for some guessed vector the DP entries satisfy $T_t[n_t,m_{t,1},\dots,m_{t,\lambda}]=1$ for all $t \in [\tau]$, then the backpointers yield bundles and payments that form a feasible envy-free, budget-balanced solution, so the algorithm accepts if and only if the instance is a \texttt{yes}-instance. 

\textbf{Running time.}
Let $b_{\max}$ denote the largest individual budget bound, and let $m_j$ be the number of goods of type~$j$ (so that $m = \sum_{j=1}^\lambda m_j$).  
Each total payment $P_t$ can take up to $1 + nb_{\max}$ integer values, hence there are at most $(1 + nb_{\max})^{\tau}$ possible payment guesses.

For the allocation, each good type $j$ must be divided among the $\tau$ agent types. Applying the AM--GM inequality to the terms $(1+m_1),\dots,(1+m_\lambda)$, we obtain the bound
\[
\!\big(1+\tfrac{m}{\lambda}\big)^{\lambda(\tau -1 )}.
\]
For a fixed global guess (payments and good counts), all DP tables together contain at most
$n\cdot(1+\tfrac{m}{\lambda})^{\lambda}$ entries, each updated over at most $(1+\tfrac{m}{\lambda})^{\lambda}$ admissible bundles.  
Thus the total DP cost per guess is $\mathcal{O}\big(n(1+\tfrac{m}{\lambda})^{2\lambda}\big)$.

Combining these factors gives an overall running time of
\[
poly(n,m)\cdot (1 +nb_{\max})^{\tau}\cdot \Big(1+\tfrac{m}{\lambda}\Big)^{\lambda(\tau+1)}.
\]
When $\tau$ and $\lambda$ are constants, this is pseudo-polynomial in the numerical parameters (budgets, valuations, and goods counts).
\end{proof}

\begin{remark}
                The same approach remains valid when payments are constrained to be integral across agent types but not necessarily for individual agents. The payment guesses still enforce type-level integrality, while the dynamic program permits fractional utilities within each type.
\end{remark}

This resolves the EF part of an open problem of~\citet{DBLP:conf/ijcai/0004R23}, who asked whether EF with few agent types and few item types can be solved in polynomial time. Since our model generalizes EF (corresponding to budget zero), Theorem~\ref{thm:more-types} implies a positive answer for explicitly listed items: when payments are absent, the only payment vector is the zero vector (hence $(1 + n \cdot b_{\max})^{\tau}=1$), and the running time becomes polynomial for constant $\tau$ and $\lambda$. The binary high-multiplicity variant remains open.

\section{Discussion}\label{conclusion}
In this paper, we studied envy-free allocations with monetary transfers under individual budget constraints. This extends the well-studied subsidies model
\cite{DBLP:conf/sagt/HalpernS19,SubsidiesGoko} by replacing external subsidies with budget-balanced transfers between agents.
Our tractability results for identical valuations and for instances with few agent and good types identify tractable structure within the general problem. Several natural questions remain open. In particular, it remains open whether \localBudgetAbrv{} is fixed-parameter tractable with respect to the number of
good types~$\lambda$ alone.

A particularly relevant direction for future work is the analysis of strategy-proofness. \citet{tadenuma1995games} show that no mechanism can simultaneously satisfy envy-freeness, strategy-proofness, and budget balance in a quasi-linear environment with at least three agents and additive valuations. Since their setting is a special case of our model when budgets are sufficiently large, this impossibility also applies to our framework. This motivates investigating whether strategy-proofness can be recovered by relaxing budget balance, for example by allowing external subsidies or donations, or by imposing structural restrictions on the budgets~\cite{strat}. Individual budgets may also create new manipulation opportunities, as agents can misreport both valuations and budgets.

While our focus is on goods, extending our model and algorithmic results to chores and mixed settings is a natural direction for future work. In such settings, agents who receive no chores may need to make payments
to maintain envy-freeness, creating new interactions between allocations
and individual budgets.

\section*{Acknowledgments}
This project was initiated at IIT Jodhpur during a research exchange supported by a DST–DAAD travel grant (DAAD Grant Agreement No.\ 57683502).  
Eva Deltl acknowledges support from the Deutsche Forschungsgemeinschaft (DFG, German Research Foundation), project COMSOC-MPMS (Grant Agreement No.\ 465371386).  
Pallavi Jain acknowledges support from the SERB-SUPRA grant (No.\ SPR/2021/000860) and the IIT Jodhpur Seed Grant (No.\ I/SEED/PJ/20210119).  
Tanmay Inamdar acknowledges support from the Anusandhan National Research Foundation (ANRF, Grant No.\ ANRF/ECRG/2024/004144/PMS) and the IIT Jodhpur Research Initiation Grant (Grant No.\ I/RIG/TNI/20240072). 


\bibliographystyle{abbrvnat}
\bibliography{arxiv}


\end{document}